\newenvironment{Proof}{{\noindent\it Proof:  }}{\hfill\rule{2mm}{2mm}}
\newenvironment{proofof}[1]{{\noindent\it Proof of #1:  }}{}
\newtheorem{theorem}{Theorem}[section]
\newtheorem{definition}[theorem]{Definition}
\newtheorem{lemma}[theorem]{Lemma}
\newtheorem{algo}{Algorithm}
\newcommand{\N}{\ensuremath{\mathbb N}}
\newcommand{\poly}{\ensuremath{\mathsf{poly}}}
\begin{document}
\title{Synchronization Strings: Efficient and Fast Deterministic Constructions over Small Alphabets}
\author{Kuan Cheng\thanks{kcheng17@jhu.edu. Department of Computer Science, Johns Hopkins University.   Supported by NSF Grant CCF-1617713.}
 \and Xin Li\thanks{lixints@cs.jhu.edu. Department of Computer Science, Johns Hopkins University.  Supported by NSF Grant CCF-1617713.}
  \and Ke Wu\thanks{AshleyMo@jhu.edu. Department of Computer Science, Johns Hopkins University.}
  }

\date{}
\maketitle{}

\begin{abstract}
Synchronization strings are recently introduced by Haeupler and Shahrasbi \cite{haeupler2017synchronization} in the study of codes for correcting insertion and deletion errors (insdel codes).\ A synchronization string is an encoding of the indices of the symbols in a string, and together with an appropriate decoding algorithm it can transform insertion and deletion errors into standard symbol erasures and corruptions. This reduces the problem of constructing insdel codes to the problem of constructing standard error correcting codes, which is much better understood. Besides this, synchronization strings are also useful in other applications such as synchronization sequences and interactive coding schemes.

Amazingly, Haeupler and Shahrasbi \cite{haeupler2017synchronization} showed that for any error parameter $\varepsilon>0$, synchronization strings of arbitrary length exist over an alphabet whose size depends only on $\varepsilon$. Specifically, \cite{haeupler2017synchronization} obtained an alphabet size of $O(\varepsilon^{-4})$, as well as a randomized construction that runs in expected time $O(n^5)$. However, it remains an interesting question to find deterministic and more efficient constructions.

In this paper, we improve the construction in \cite{haeupler2017synchronization} in three aspects: we achieve a smaller alphabet size, a deterministic construction, and a faster algorithm. Along the way we introduce a new combinatorial object, and establish a new connection between synchronization strings and insdel codes --- such codes can be used in a simple way to construct synchronization strings. This new connection complements the connection found in \cite{haeupler2017synchronization}, and may be of independent interest. In an independent work \cite{HS17c}, Haeupler and Shahrasbi also give deterministic constructions of synchronization strings over arbitrary length (or even infinite length). Their constructions can achieve linear construction time, but have alphabet size $\varepsilon^{-O(1)}$, which may be larger than ours.
\end{abstract}

\newpage

\section{Introduction}
The general and most important goal of coding theory is to ensure the transmission of messages reliably in the presence of noise or adversarial error. Starting from the pioneering works of Shannon, Hamming and many others, coding theory has evolved into an extensively studied field, with applications found in various areas in computer science. Regarding the general goal of correcting errors, we now have an almost completely understanding of how to deal with symbol erasures and corruptions. On the other hand, the knowledge of codes for timing errors such as insertion and deletion errors, has lagged far behind despite also being studied intensively since the 1960s. In practice, this is one of the main reasons why communication systems require a lot of effort and resources to maintain synchronization strictly.

Intuitively, one major difficulty in designing codes for insertion and deletion errors is that in the received codeword, the positions of the symbols may have changed. This is in contrast to standard symbol erasures and corruptions, where the positions of the symbols always stay the same. Thus, many of the known techniques in designing codes for standard symbol erasures and corruptions, cannot be directly applied to the case of insertion and deletion errors. Naturally, if one can find a way to bridge this gap and transform insertion and deletion errors into symbol erasures and corruptions, this will make our life much easier. 

In a recent work \cite{haeupler2017synchronization},  Haeupler and Shahrasbi introduced a combinatorial object called \emph{synchronization strings} to achieve exactly this goal. Informally, a synchronization string of length $n$ is an encoding of the indices of the $n$ positions into one string over some alphabet $\Sigma$, such that despite some insertion and deletion errors, one can still recover the correct indices of many symbols. Once we know the correct indices of these symbols, a standard error correcting code can then be used to recover the original message. This then gives a code for insertion and deletion errors, which is the combination of a standard error correcting code and a synchronization string.

The simplest example of a synchronization string is just to record the index of each symbol, i.e, the string $1, 2, \cdots, n$. It can be easily checked that even if $(1-\varepsilon)$ fraction of these indices are deleted, one can still correctly recover the remaining $\varepsilon$ fraction. However, this  synchronization string uses an alphabet whose size grows with the length of the string. The main contribution of \cite{haeupler2017synchronization} is to show that under a slight relaxation, there exist synchronization strings of arbitrary length $n$ over an alphabet with fixed size. Furthermore, \cite{haeupler2017synchronization} showed a very efficient (in fact, streaming) way to recover the indices of many symbols correctly from a synchronization string after insertion and deletion errors. Together this gives a code that for any $\delta \in (0,1)$ and $\varepsilon>0$, can correct $\delta$ fraction of insertion and deletion errors with rate $1-\delta-\varepsilon$. % and alphabet size $\varepsilon^{-O(1)}$.%Together this gives a code that can correct $1-\varepsilon$ fraction of insertion and deletion errors, with rate $\Omega(\varepsilon)$ and alphabet size $\varepsilon^{-O(1)}$. 

Besides this, synchronization strings have found a variety of applications, such as in  synchronization sequences \cite{mercier2010survey}, interactive coding schemes \cite{gelles2015coding, gelles2015capacity, ghaffari2014optimal1, ghaffari2014optimal2, haeupler2014interactive, kol2013interactive, haeupler2017synsimucode}, and edit distance tree codes \cite{braverman2017coding}. Furthermore, because of the nice properties of synchronization strings, it is plausible that they will find other applications in the future. However, despite the usefulness of such objects, it remains an interesting open problem to find deterministic and more efficient constructions of synchronization strings, as in \cite{haeupler2017synchronization} the authors only give a randomized construction.

To discuss the work of \cite{haeupler2017synchronization} and synchronization strings in more details, we first need the following formal definition of a synchronization string.

\begin{definition} \cite{haeupler2017synchronization} \label{def:sc} ($\varepsilon$-synchronization string)  For some alphabet $\Sigma$, a string $S \in \Sigma^n$ is an $\varepsilon$-synchronization string if $\forall 1\leq i < j <k\leq n$, we have that $ED(S[i,j],S[j+1,k])>(1-\varepsilon)(k-i)$ where $ED( , )$ stands for the edit distance of two strings, and $S[i,j]$ means the continuous subsequence of $S$ from $i$th position to $j$th position, both ends included.
\end{definition}

In \cite{haeupler2017synchronization}, Haeupler and Shahrasbi showed that for any $n \in \N$, $\varepsilon$-synchronization strings with length $n$ exist over an alphabet of size $O(\varepsilon^{-4})$. They further gave a randomized algorithm to construct such strings with expected running time $O(n^5)$. In this paper, we improve their construction in the following three aspects: we achieve a smaller alphabet size, a deterministic construction instead of a randomized construction, and a faster algorithm to construct synchronization strings.

\subsection{Our result}
Our first result shows the existence of $\varepsilon$-synchronization strings over a smaller alphabet, and in addition there is a randomized algorithm to compute such strings in expected polynomial time:

\begin{theorem}
For any $\varepsilon\in(0,1)$ and any $n \in \N$, there exists an $\varepsilon$-synchronization string $S$ of length $n$ over alphabet $\Sigma$ with $|\Sigma| =O(\varepsilon^{-2})$. In addition, there exists a randomized algorithm that can construct such a string in expected time $O(n^5\log n)$.
\end{theorem}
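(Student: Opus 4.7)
The plan is to apply the probabilistic method to a uniformly random string $S \in \Sigma^n$ where $|\Sigma| = q = C/\varepsilon^2$ for a sufficiently large absolute constant $C$. For each triple $1 \le i < j < k \le n$, let $A_{i,j,k}$ denote the bad event $ED(S[i,j], S[j+1,k]) \le (1-\varepsilon)(k-i)$. Using the identity $ED(x,y) = |x| + |y| - 2 \cdot LCS(x,y)$ for edit distance under only insertions and deletions, $A_{i,j,k}$ is equivalent (up to rounding) to $LCS(S[i,j], S[j+1,k]) \ge \varepsilon(k-i)/2$. Writing $L = k-i$ and $m = \lceil \varepsilon L / 2 \rceil$, a first-moment bound on the number of common subsequences gives
$$
\Pr[A_{i,j,k}] \;\le\; \binom{L}{m}^2 q^{-m} \;\le\; (eL/m)^{2m}\, q^{-m},
$$
and plugging in $m = \varepsilon L/2$ and $q = C/\varepsilon^2$ yields $\Pr[A_{i,j,k}] \le \alpha^{\varepsilon L}$ for some $\alpha = \alpha(C) < 1$. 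This is precisely why a quadratic-in-$1/\varepsilon$ alphabet suffices: the $\binom{L}{m}$ factor costs $(1/\varepsilon)^{\varepsilon L}$ while the alphabet provides a $\sqrt{q}^{\,\varepsilon L}$ savings, so $q = \Theta(1/\varepsilon^2)$ is the natural threshold.

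Since $\alpha^{\varepsilon L}$ is close to one for small $L$, a direct union bound over the $\Theta(n^3)$ triples fails, so I would invoke the asymmetric Lov\'asz Local Lemma. Two bad events $A_{i,j,k}$ and $A_{i',j',k'}$ are independent whenever $[i,k]$ and $[i',k']$ are disjoint, and for a fixed $A_{i,j,k}$ the number of triples of interval length $L'$ whose interval meets $[i,k]$ is at most $(L+L')\,L'$. Assigning weight $x_{i,j,k} = \beta^{\varepsilon L}$ for some $\alpha < \beta < 1$, the product $\prod(1 - x_{i',j',k'})$ across the dependency neighborhood is dominated by a convergent geometric series and evaluates to $\Omega(1)$; the LLL condition $\Pr[A_{i,j,k}] \le x_{i,j,k} \prod(1 - x_{i',j',k'})$ then reduces to $\alpha^{\varepsilon L} \le \Omega(1) \cdot \beta^{\varepsilon L}$, which holds once $C$ is chosen large enough. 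This establishes existence of an $\varepsilon$-synchronization string over an alphabet of size $O(\varepsilon^{-2})$.

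For the algorithmic part, the cleanest strategy is repeated sampling combined with verification: given a candidate $S$, compute $ED(S[i,j], S[j+1,k])$ for every triple by the standard $O((k-i)^2)$-time dynamic program, so the total verification cost is $\sum_{i<j<k}(k-i)^2 = O(n^5)$. The expected number of samples is the reciprocal of the success probability, so to match the claimed $O(n^5 \log n)$ runtime one would want this probability to be $\Omega(1/\log n)$. This is the step I expect to be the main obstacle, since the LLL product $\prod(1 - x_{i,j,k})$ over $\Theta(n^3)$ terms is far from $1/\log n$. The natural remedy is to replace pure rejection by the constructive Lov\'asz Local Lemma of Moser--Tardos, whose expected number of resampling steps is $O\bigl(\sum_{i,j,k} x_{i,j,k}/(1 - x_{i,j,k})\bigr) = O(n \cdot \poly(1/\varepsilon))$; coupled with incremental verification that only re-checks triples whose intervals are touched by each resample, this should push the total expected running time to the desired $O(n^5 \log n)$. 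The delicate part is choosing the weights $x_{i,j,k}$ so that the LLL condition holds and the Moser--Tardos step sum is small simultaneously.
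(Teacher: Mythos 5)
The existence half of your proposal does not go through: for a \emph{uniformly} random string over an alphabet of size $q=C\varepsilon^{-2}$ with $C$ an absolute constant, the Lov\'asz Local Lemma condition cannot be satisfied, and the culprit is exactly the short intervals that you keep as bad events. For an interval of length $L=\Theta(1/\varepsilon)$, badness is a birthday-type event (a handful of collisions among $\Theta(1/\varepsilon)$ uniform symbols from an alphabet of size $C\varepsilon^{-2}$), so its probability is a constant $c_0>0$ depending only on $C$, not on $\varepsilon$. Since the LLL condition forces $x_B\ge\Pr[B]$ for every event, and such an interval has $\Omega(1/\varepsilon)$ neighbors of the same length, the condition for it would require $c_0\le x_A\prod_{B\in\Gamma(A)}(1-x_B)\le(1-c_0)^{\Omega(1/\varepsilon)}$, which is false for small $\varepsilon$ no matter how the weights are chosen. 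In particular your claim that $\prod(1-x_{i',j',k'})=\Omega(1)$ is incorrect: with $x=\beta^{\varepsilon L'}$ the relevant series $\sum_{L'\ge 2}(L+L')L'\beta^{\varepsilon L'}$ starts at $L'=2$, where $\beta^{\varepsilon L'}\approx 1$, and is $\Theta(L/\varepsilon^{2}+1/\varepsilon^{3})$ rather than $O(1)$, so the neighborhood product is $e^{-\Omega(L/\varepsilon^{2})}$ and the condition would force the alphabet to be exponentially large in $\poly(1/\varepsilon)$. This is not an artifact of the analysis: a uniformly random string over any $O(\varepsilon^{-2})$-size alphabet is with high probability \emph{not} an $\varepsilon$-synchronization string for large $n$ (disjoint windows of length $\Theta(1/\varepsilon)$ contain violating repeats independently with constant probability), which is precisely why \cite{haeupler2017synchronization} spent an extra factor $\varepsilon^{-2}$ on short intervals and only got $O(\varepsilon^{-4})$. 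Your first-moment bound $\Pr[A]\le(e/(\varepsilon\sqrt q))^{\varepsilon L}$ and the LLL skeleton do match the paper for \emph{long} intervals; what is missing is any mechanism for the short ones.

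The paper reaches $O(\varepsilon^{-2})$ by changing the sample space, not the weights: the $i$-th symbol is drawn uniformly from $\Sigma$ \emph{excluding the previous $t-1$ symbols}, with $t=c_2\varepsilon^{-2}<|\Sigma|$. Then every interval of length at most $t$ has pairwise distinct symbols and is automatically good, and the LLL is applied only to intervals of length $l>t$, where $\Pr[\mathrm{bad}]\le C'^{-\varepsilon l}$ with $\varepsilon l\ge 1/\varepsilon$ is tiny and the neighborhood product (a series now starting at $l'=t$) really is $\Omega(1)$. This dependent sampling creates two obligations your sketch does not address: (i) one must prove that the badness of an interval is still mutually independent of the badness of all disjoint intervals, which the paper does via a relabeling argument showing the conditional badness probability is the same for every compatible prefix; and (ii) to run the constructive LLL one must express the sampling through independent variables, which the paper does by drawing an independent $P_i$ uniform on $\{1,\dots,|\Sigma|-h\}$ and converting it to a symbol via a reordering of $\Sigma$, so that the badness of $S[i,k]$ depends only on $P_i,\dots,P_k$. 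Once these are in place, the algorithmic part is close to your Moser--Tardos variant (resample a violated interval, recheck all $O(n^2)$ intervals by the quadratic edit-distance DP, expected $O(n)$ resamplings), giving $O(n^5\log\frac{1}{\varepsilon})=O(n^5\log n)$ after noting one may assume $\varepsilon>1/\sqrt n$; pure rejection sampling, as you suspected, does not give the claimed bound.
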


Next, we give deterministic polynomial time constructions for $\varepsilon$-synchronization strings when $\varepsilon$ is any constant, albeit with a slightly larger alphabet size.

\begin{theorem}
For any constant $\varepsilon\in(0,1)$ and any $n \in \N$, an $\varepsilon$-synchronization string $S$ of length $n$ over alphabet $\Sigma$ with $|\Sigma| = O(\varepsilon^{-3})$ can be constructed deterministically in time $\poly(n)$.
\end{theorem}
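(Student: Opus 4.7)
My plan is to derandomize the randomized construction (underlying the first theorem) by exploiting the new connection between insdel codes and synchronization strings that the abstract promises. The strategy is to replace the random choice in the existential argument with an explicit object supplied by an insertion/deletion code, paying a mild factor $\varepsilon^{-1}$ in alphabet size (hence $O(\varepsilon^{-3})$ instead of $O(\varepsilon^{-2})$) in exchange for determinism and $\poly(n)$ running time.

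As a first step, I would construct, deterministically in $\poly(n)$ time, an insdel code of block length $n$ over an alphabet of size $O(\varepsilon^{-c})$ (for some small constant $c$) with relative edit distance at least $1-O(\varepsilon)$. A natural route is a Schulman--Zuckerman style concatenation: brute-force an inner insdel code of constant length $\poly(1/\varepsilon)$, which takes only constant time when $\varepsilon$ is a constant, and then concatenate with a Reed--Solomon (or algebraic-geometric) outer code over a matching alphabet. The outer code is constructible in $\poly(n)$ time by standard methods, so the whole insdel code is as well.

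As a second step, I would invoke the conversion lemma that the abstract alludes to: given such an insdel code, extract a single codeword (or apply a simple transformation of the codewords) to obtain an $\varepsilon$-synchronization string of length $n$ satisfying Definition~\ref{def:sc}. The extra factor in alphabet size should arise here, for instance by tagging each codeword symbol with a small amount of side information (such as a coarse position index over $O(1/\varepsilon)$ buckets) that is needed to make the conversion go through locally.

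The main obstacle will be the conversion lemma itself. An insdel code only guarantees large edit distance between \emph{different} codewords, whereas an $\varepsilon$-synchronization string demands large edit distance between every prefix--suffix pair $S[i,j]$ and $S[j+1,k]$ \emph{within a single} sequence. Bridging this gap is presumably the role of the new combinatorial object mentioned in the abstract: it should encode enough structure at each position so that any internal split at index $j$ behaves, from the edit-distance point of view, like the split between two distinct codewords of the underlying insdel code. Showing that such an object can be constructed explicitly from an explicit insdel code, while preserving both $\poly(n)$ running time and the $O(\varepsilon^{-3})$ alphabet bound, is where the bulk of the technical work will be, and I anticipate a case split on whether the triple $(i,j,k)$ is entirely inside one ``block'' of the construction or spans several, handled respectively by the local seed structure and by the global code distance.
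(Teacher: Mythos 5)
There is a genuine gap, and it sits exactly where you place it yourself: the ``conversion lemma'' from an insdel code to a synchronization string is not an incidental detail to be filled in later --- it is the entire content of the theorem, and your proposed architecture does not support it. Extracting a single codeword of a block-length-$n$ insdel code cannot work even in principle: the code's distance property only constrains pairs of \emph{distinct} codewords and says nothing about the internal edit-distance structure of any one codeword, so a codeword gives no bound on $ED(S[i,j],S[j+1,k])$ for splits inside it. Your fallback of tagging symbols with a coarse position index over $O(1/\varepsilon)$ buckets is also too weak: such a tag only forces distinctness within windows of length $O(1/\varepsilon)$, and it gives no control over intervals of intermediate length (between $\Theta(1/\varepsilon)$ and the block length), which is precisely the regime where the synchronization condition is hard.

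The paper's construction (Algorithm~\ref{algo}, Lemma~\ref{main}, Theorem~\ref{mainTheorem}) works at a different scale and with a different tag. It takes a code of \emph{short} block length $m=O(\frac{\log n}{\varepsilon})$ --- a Gilbert--Varshamov-type code found by brute force (Lemma~\ref{smallcode}), with $n$ codewords, distance $(1-O(\varepsilon))m$ and alphabet $O(1/\varepsilon)$ --- pairs every codeword coordinatewise with a fixed $\frac{\varepsilon}{30}$-synchronization \emph{circle} of length $m$ over an alphabet of size $O(\varepsilon^{-2})$ (found by exhaustive search, which is $\poly(n)$ only because $\varepsilon$ is constant and $m=O(\log n)$), and then concatenates $n/m$ \emph{distinct} resulting codewords. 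Long intervals ($>m$) are handled by combining Lemma~\ref{insdelCode} (the HS17 direction, giving a small LCS between distinct codewords) with Lemma~\ref{lcs}, which bounds the LCS of two concatenations by the number of constituent pieces times the pairwise LCS bound; short intervals, splits interior to a block, and intervals straddling block boundaries are handled by the circle property of the tag --- this is exactly why the new object is a synchronization circle rather than a string, and it is what your ``side information'' would have to be. The alphabet accounting is $O(1/\varepsilon)\times O(\varepsilon^{-2})=O(\varepsilon^{-3})$. So while your high-level slogan (insdel codes yield synchronization strings) matches the paper, your plan as written --- length-$n$ code, one codeword, coarse index tag --- would fail, and the missing ingredients are the short-block concatenation of distinct codewords, the brute-forced small synchronization circle as the per-block tag, and the LCS-of-concatenations lemma.
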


Our next theorem can handle smaller $\varepsilon$, i.e., $\varepsilon=o(1)$. We also significantly improve the time to construct an $\varepsilon$-synchronization string. In fact, we achieve near linear construction time, at the price of increasing the alphabet size again by a factor of $O(1/\varepsilon)$.

\begin{theorem}
There exists a constant $C>1$ such that for any $n \in \N$ and any $\varepsilon \geq \frac{C (\log \log n)^2}{\log n}$, an $\varepsilon$-synchronization string $S$ of length $n$ over alphabet $\Sigma$ with $|\Sigma| = O(\varepsilon^{-4})$ can be constructed deterministically in time $O(n\cdot  (\log \log n)^2)$.
\end{theorem}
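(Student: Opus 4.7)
\smallskip
\noindent \emph{Proof plan.} The plan is to combine Theorem 1.3 with a product-style composition lemma, using Theorem 1.3 to supply a short base string and then iterating the composition about $\log\log n$ times to inflate it into a synchronization string of length $n$ in near-linear time. The key budget observation is that the hypothesis $\varepsilon \geq C(\log\log n)^2/\log n$ forces $1/\varepsilon \leq \log n /(C(\log\log n)^2)$, so any subroutine whose running time is $\poly(1/\varepsilon) = \poly(\log n)$ is automatically $o(n)$ and fits inside the overall budget.

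The base step applies Theorem 1.3 with parameter $\varepsilon' = \varepsilon/c_0$ (for a suitable absolute constant $c_0$) to produce an $\varepsilon'$-sync string $S_b$ of length $m = \Theta(\poly(1/\varepsilon))$ over an alphabet $\Sigma_b$ of size $O(\varepsilon^{-3})$, in $\poly(m) = \poly(\log n)$ deterministic time. The crucial tool is then a \emph{product lemma}: given an $\varepsilon_1$-sync string $A$ of length $n_A$ over $\Sigma_A$ and an $\varepsilon_2$-sync string $B$ of length $n_B$ over $\Sigma_B$, the string $P$ of length $n_A n_B$ defined by $P[(i-1)n_B + j] = (A[i], B[j])$ is an $O(\varepsilon_1 + \varepsilon_2 + 1/n_B)$-sync string. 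I would prove this by the natural case split on the triple $(i,j,k)$ in Definition 1.1: when $k - i \leq n_B$, the two substrings span at most two consecutive $B$-blocks and the $B$-projection is (a concatenation of) substrings of $B$, so $B$'s sync property applies directly; when $k - i > n_B$, coarsening the substrings to $B$-blocks lets us invoke $A$'s sync property, losing only $O(1/n_B)$ from the partially-covered boundary blocks.

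Starting from $S_b$, I would iterate the product roughly $k = O(\log\log n)$ times, squaring the length at each step so that $m^{2^k} \geq n$. To keep the final alphabet at $O(\varepsilon^{-4})$ rather than blowing up multiplicatively, all but the innermost application would use a fixed small ``index'' sync string of alphabet size $O(1/\varepsilon)$, again produced once by Theorem 1.3 in $o(n)$ time; only the bottom level carries the full $O(\varepsilon^{-3})$-alphabet base $S_b$, so the final alphabet is $O(\varepsilon^{-3}) \cdot O(1/\varepsilon) = O(\varepsilon^{-4})$. Each output symbol of the long string can be computed by $O(\log\log n)$ layered table lookups and $O(\log\log n)$-sized arithmetic per layer to decode the mixed-radix representation of the position, giving total write-out time $O(n(\log\log n)^2)$, while all base constructions together contribute $o(n)$.

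The main obstacle I expect is the product lemma and its clean iteration: each level introduces an additive $O(1/n_B)$ error from boundary blocks and could double $\varepsilon$ if handled loosely, so the $O(\log\log n)$ levels must share a careful per-level error budget summing to at most $\varepsilon$ (which is why the base step starts from $\varepsilon' = \varepsilon/c_0$ with geometrically shrinking contributions above). A secondary subtlety is verifying the lemma in the asymmetric setting where the outer factors reuse a common thin ``index'' alphabet of size $O(1/\varepsilon)$; controlling this alphabet blow-up is exactly what costs the extra $O(1/\varepsilon)$ factor relative to Theorem 1.3, and it is where the new insdel-code-to-sync-string connection advertised in the abstract can plausibly be invoked to build the index strings with the right parameters.
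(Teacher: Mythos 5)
The fatal problem is your alphabet (and error) accounting across the $\Theta(\log\log n)$ levels of the product. Your product operation pairs coordinates, so every application multiplies alphabet sizes: after $k$ levels the alphabet has $k$ coordinates and size at least $2^{k}$, and with ``index'' strings of size $O(1/\varepsilon)$ at each outer level it is $(\Omega(1/\varepsilon))^{\Theta(\log\log n)}$, which is $\mathrm{poly}\log n$ at best --- not $O(\varepsilon^{-3})\cdot O(1/\varepsilon)$. The claim that ``only the bottom level carries the full alphabet'' has no mechanism behind it: there is no way, within the pairing construction, for the outer levels to share a single alphabet factor. Moreover, the outer index strings you invoke do not exist as stated: a length-$\ell$ $\varepsilon$-synchronization string over an alphabet of size $O(1/\varepsilon)$ is precisely the open problem raised in the paper's discussion (even non-constructively one only knows $O(\varepsilon^{-2})$, and the deterministic construction you can actually call gives $O(\varepsilon^{-3})$); if you instead degrade their sync parameter to fit a small alphabet, the per-level errors in your product lemma are additive and the $\Theta(\log\log n)$ levels destroy the final $\varepsilon$. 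Your ``geometrically shrinking per-level budget'' would require geometrically better sync parameters, hence larger alphabets, at the outer levels --- in direct tension with the fixed small alphabet you need. The paper avoids all of this by \emph{not} composing synchronization strings with each other: it concatenates codewords of an error-correcting code (an outer Reed--Solomon code concatenated with a brute-force inner code, alphabet $O(1/\varepsilon)$) paired with one synchronization circle of the codeword length built by the $\poly$-time construction (Algorithm 1 and Lemma 4.3). One such step amplifies length exponentially --- a code of block length $O(\frac{\log n}{\varepsilon})$ already has $n$ codewords --- so two levels suffice, the alphabet is just $O(1/\varepsilon)\cdot O(\varepsilon^{-3})=O(\varepsilon^{-4})$, and near-linear time follows from fast RS encoding. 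Your product only squares the length per level, which is what forces $\omega(1)$ levels and sinks the alphabet bound; a repair essentially requires replacing the outer sync-string factor by a code, i.e., the paper's construction.

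A secondary issue: your product lemma is essentially true but not for the reason sketched, and not exactly as stated. In the long-interval case the danger is not ``partially-covered boundary blocks'' but matches between far-apart blocks that carry the same $A$-symbol, where entire blocks of the product are identical strings; bounding this requires an argument such as projecting the LCS matching onto each fixed $B$-position (showing the matched block pairs form a common subsequence of the two $A$-substrings) together with the fact that symbols of an $\varepsilon_1$-sync string are distinct within windows of length $\Theta(1/\varepsilon_1)$, and the loss is $O(1/\#\mathrm{blocks})$, not $O(1/n_B)$. As literally stated the lemma even fails in the corner case $A[1]=A[2]$ (which Definition 1.1 permits), since then the first two blocks of the product are identical and an interval covering exactly them has relative LCS $1/2$. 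These are fixable, but the level/alphabet structure above is not, so the proposal does not establish the theorem. (Also note you cite ``Theorem 1.3'' for the base string, which is the statement being proven; you mean the $\poly(n)$-time, $O(\varepsilon^{-3})$-alphabet construction.)
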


In an independent work \cite{HS17c}, Haeupler and Shahrasbi also give deterministic constructions of synchronization strings over arbitrary length (or even infinite length). Their constructions can achieve linear construction time, but have alphabet size $\varepsilon^{-O(1)}$, which may be larger than ours.

\subsection{Our techniques}
To prove the existence of $\varepsilon$-synchronization strings over an alphabet of size $O(\varepsilon^{-2})$, we modify the existence proof in \cite{haeupler2017synchronization}. Note that for a string $S$ to be an $\varepsilon$-synchronization string, we need to make sure that every interval $S[i, k]$ of $S$ satisfies the property in Definition~\ref{def:sc}. In \cite{haeupler2017synchronization}, the authors proved the existence of $\varepsilon$-synchronization strings over an alphabet of size $O(\varepsilon^{-4})$, by dealing with two cases separately. The first case is where the interval of interest is short, i.e., with length at most $\varepsilon^{-2}$. In this case the authors use an alphabet $\Sigma_1$ of size $m=\varepsilon^{-2}$ and mark every position $i$ with the symbol $i$ mod $m$. This ensures that every small interval has distinct symbols. The second case is where the interval of interest is long, i.e, with length at least $\varepsilon^{-2}$. The authors handle this case by uniformly randomly choosing a symbol for each position from another alphabet $\Sigma_2$ of size $O(\varepsilon^{-2})$. They then use the General Lov\'asz Local Lemma to show that with positive probability, there exists a choice of the symbols that ensures every large interval satisfies the property in Definition~\ref{def:sc}. The final synchronization string is then obtained by combining these two cases, i.e., $\Sigma =\Sigma_1 \times \Sigma_2$ and this results in an alphabet of size $O(\varepsilon^{-4})$. In \cite{haeupler2017synchronization}, the authors conjectured that by using a non-uniform sample space in the General Lov\'asz Local Lemma, one could potentially avoid the use of $\Sigma_1$ and thus reduce the alphabet size to $O(\varepsilon^{-2})$.

Here we confirm their conjecture and indeed present a proof of the existence of $\varepsilon$-synchronization strings over an alphabet of size $O(\varepsilon^{-2})$, based on using non-uniform sample space in the General Lov\'asz Local Lemma. Specifically, fix an alphabet $\Sigma$ and a length $t=\varepsilon^{-2}$, we pick the symbols for each position in the string $S$ as follows: pick the first symbol uniformly randomly from $\Sigma$. Then, for the $i$'th position, we uniformly randomly pick a symbol from $\Sigma$, conditioned on that this symbol is distinct from the previous $t-1$ or $i-1$ symbols, whichever is smaller.

Note that this way of choosing symbols also guarantees that every interval of length at most $\varepsilon^{-2}$ has distinct symbols. We then use a similar proof as in \cite{haeupler2017synchronization} to prove that again, with positive probability, there exists a choice of the symbols that ensures every large interval satisfies the property in Definition~\ref{def:sc}. For this we need to carefully analyze the dependence graph used in the General Lov\'asz Local Lemma. Fortunately it turns out that the same property as in \cite{haeupler2017synchronization} holds: the event that an interval satisfies the property in Definition~\ref{def:sc} is independent of all the corresponding events of disjoint intervals. Similar to \cite{haeupler2017synchronization}, this also gives a randomized algorithm to construct such synchronization strings by using algorithmic versions of the Lov\'asz Local Lemma \cite{moser2010constructive, haeupler2011new}.

To give explicit deterministic constructions of $\varepsilon$-synchronization strings, our starting point is the following observation.\ Suppose we are looking at an interval $S[i, k]$ and $j$ is the midpoint of $i$ and $k$, then we need that $ED(S[i,j],S[j+1,k])>(1-\varepsilon)(k-i)$. This basically means that we need $S[i, j]$ and $S[j+1, k]$ to have large edit distance. Note that $S[i, j]$ and $S[j+1, k]$ have the same length, thus if they are two different codewords of some good codes for insertion and deletion errors, then this property is satisfied. This suggests the following way to construct $\varepsilon$-synchronization strings: take a code for insertion and deletion errors, and concatenate all codewords into a string. We indeed show that with carefully chosen parameters, this idea can work for any interval with relatively large length (e.g., at least twice the length of the codeword). To handle intervals with smaller length, we need to add to every codeword another string with the same length, and this additional string basically corresponds to another synchronization string with the length of the codeword. However, we also need to handle the situation where the interval is split in the middle by the boundary of one codeword and another. For this we introduce a new combinatorial object which we call a \emph{synchronization circle}.\ Intuitively, a synchronization circle is a generalization and strengthening of a synchronization string such that no matter what point one chooses to cut the circle into a string, the resulted string is still a synchronization string. We show how to construct a synchronization circle by concatenating two synchronization strings over different alphabets into a circle. Note that this only doubles the alphabet size. %The actual proof that the concatenation of codewords gives a synchronization string involves carefully analyzing several different cases of the interval we are looking at.

Now, we have essentially reduced the task of constructing synchronization strings into finding good codes for insertion and deletion errors. Note that this is the reverse direction of what was established in previous work such as \cite{haeupler2017synchronization}. There, the authors showed how to use synchronization strings to construct good codes for insertion and deletion errors. The connection that we find, thus suggests that synchronization strings and codes for insertion and deletion errors are actually more closely related to each other, in the sense that each of them can be used to construct the other. We view this connection as a main conceptual contribution of our paper.

Going back to the explicit constructions of synchronization strings, suppose we want to construct an $\varepsilon$-synchronization string with length $n$, then all we need is to find a good code for insertion and deletion errors that contain at most $n$ codewords (the number of codewords we need is in fact less than $n$, i.e., $n$ divided by the codeword length). This corresponds to finding a good code for insertion and deletion errors with message length roughly $\log n$. At this point we can use the constructions in \cite{haeupler2017synchronization}, i.e., such a code can be constructed by combining a standard error correcting code with a synchronization string. In fact, we can use our synchronization circle described above as the synchronization string here, since we need to add the synchronization circle to every codeword anyway, and this further saves the alphabet size. We know how to construct a good standard error correcting code efficiently, and now we just need a synchronization string/circle with length $\log n$. This can be done by using a brute-force search which takes only polynomial time for any constant $\varepsilon$. The alphabet we obtain in this way is the concatenation of the alphabet of the small synchronization circle and the alphabet of the error correcting code, and this gives us size $O(\varepsilon^{-3})$. Note that the above argument basically reduces the task of constructing a synchronization string of length $n$ to that of constructing a synchronization string of length $\log n$. Thus to get more efficient construction we can recurse one more time, reducing the task to that of constructing a synchronization string of length $\log \log n$. This way we can achieve near linear time (i.e., $n \poly \log \log n$), but the alphabet size becomes $O(\varepsilon^{-4})$ since we need to use another error correcting code.

\section{Preliminaries}
Usually we use $\Sigma$ (probably with some subscripts) to denote the alphabet.

\begin{definition}[Subsequence] The subsequence of a string $S$ is any sequence of symbols obtained from $S$ by deleting some symbols. It doesn't have to be continuous.
\end{definition}

\begin{definition}[Edit distance] For every $n\in \mathbb{N}$, the edit distance $ED(S,S')$ between two strings $S, S'\in\Sigma^n$ is the minimum number of insertions and deletions required to transform $S$ into $S'$.
\end{definition}

\begin{definition}[Longest Common Subsequence] For any strings $S, S'$ over $\Sigma$, the longest common subsequence of $S$ and $S'$ is the longest pair of subsequence that are equal as strings. We denote by $LCS(S,S')$ the length of the longest common subsequence of $S$ and $S'$.
\end{definition}

Note that $ED(S,S') = |S|+|S'|-2LCS(S,S')$ where $|S|$ denotes the length of $S$.

\begin{definition}[$\varepsilon$-synchronization circle] A string $S$ is
an $\varepsilon$-synchronization circle if  $\forall 1\leq i\leq n$, $S_i,S_{i+1},\dots,S_n,S_1,S_2,\dots,S_{i-1}$ is an $\varepsilon$-synchronization string.
\end{definition}

An $(n, k, d)$ error correcting code (ECC) is a ECC with block length $n$, message length $k$ and distance $d$. For the classic ECC Reed-Solomon code, its encoding can be viewed as a process of multi-point evaluation of a polynomial. The time complexity of the multi-point evaluation is near linear by the following theorem.
\begin{theorem}[Multi-point Evaluation Complexity \cite{borodin1974fast, moenck1972fast, bostan2003tellegen}]
\label{mpetime}
For any $n\in \mathbb{N}$, any  $l = O(\log n) $ s.t. $2^l \geq n$, any polynomial $p$ over  $\mathbb{F}_{2^l}$ of degree at most $n-1$, it takes $O(n \log^2 n)$ arithmetic operations\footnote{Arithmetic operations are $+$ and $ \times$ in the corresponding field.} (including $O(n \log n)$ multiplications)
 to evaluate $p$ on any $n$ points over $\mathbb{F}_{2^l}$.
\end{theorem}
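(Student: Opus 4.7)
The plan is to apply the classical subproduct-tree algorithm of Borodin--Moenck, combined with fast polynomial arithmetic in $\mathbb{F}_{2^l}$. First I would build a balanced binary product tree with $n$ leaves: leaf $i$ stores the linear polynomial $(x-a_i)$ for the $i$-th evaluation point, and each internal node stores the product of its two children's polynomials. Thus a node at height $k$ stores a polynomial of degree $2^k$, and the root stores $M(x)=\prod_{i=1}^{n}(x-a_i)$ of degree $n$. Then I would evaluate $p$ by a top-down pass of polynomial remaindering: at the root the relevant polynomial is $p(x) \bmod M(x)=p(x)$ (since $\deg p \leq n-1$); at any internal node whose children store $Q_L,Q_R$ and which receives a polynomial $R(x)$, pass $R \bmod Q_L$ and $R \bmod Q_R$ to the two children. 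By the Chinese Remainder Theorem for polynomials, the value that reaches the leaf $(x-a_i)$ is $p(x) \bmod (x-a_i)=p(a_i)$.

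For the complexity analysis I would plug in a fast polynomial multiplication subroutine over $\mathbb{F}_{2^l}$ that costs $O(d \log d)$ field operations on polynomials of degree $d$, together with fast polynomial division via Newton iteration, which reduces a single division to a constant number of multiplications of the same order. At height $k$ of the tree there are $n/2^k$ nodes, each doing $O(2^k \log 2^k)$ work during tree construction and again during the remainder descent, so each level costs $O(n \log n)$ and summing over the $\log n$ levels gives $O(n \log^2 n)$ arithmetic operations in total. To squeeze the number of \emph{field multiplications} down to $O(n \log n)$, I would invoke the Tellegen-transpose refinement of \cite{bostan2003tellegen}, which observes that the transpose of multi-point evaluation is a related power-series product and that the underlying FFT transforms can be shared across sibling subtrees so that only the $O(n)$ pointwise multiplications per level are genuinely multiplicative.

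The main technical obstacle is providing the $O(d \log d)$ polynomial multiplication in $\mathbb{F}_{2^l}$ itself, since the classical radix-2 FFT needs $d$-th roots of unity that need not lie in $\mathbb{F}_{2^l}$. This is handled by standard techniques -- passing to a small extension that contains the required roots of unity, using the additive (Cantor / Gao--Mateer) FFT designed specifically for characteristic-$2$ fields, or Kronecker substitution into an integer FFT followed by reduction. Since the theorem is quoted directly from prior literature, I would only sketch these ingredients and refer the reader to \cite{borodin1974fast, moenck1972fast, bostan2003tellegen} for the fully detailed implementation, whose combined statement is exactly the bound claimed above.
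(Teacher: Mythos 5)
The paper itself offers no proof of this theorem---it is quoted directly from \cite{borodin1974fast, moenck1972fast, bostan2003tellegen}---and your sketch (subproduct tree, remainder descent with fast division via Newton iteration, giving $O(n\log^2 n)$ operations, plus the Tellegen-transpose refinement of \cite{bostan2003tellegen} for the $O(n\log n)$ multiplication count) is exactly the standard argument those references give. Your acknowledgment of the only delicate point, namely obtaining quasi-linear polynomial multiplication over $\mathbb{F}_{2^l}$ in characteristic $2$ (via an extension field, an additive FFT, or Kronecker substitution), is the standard resolution, so the proposal is correct and matches the cited approach.
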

So it immediately follows that the encoding of Reed-Solomon code has time complexity near linear.
\begin{theorem}
\label{RSenctime}
For any $(n, k, d)$ Reed-Solomon code with alphabet size $O(n)$, the encoding takes time $O(n  \log^3 n)$.
\end{theorem}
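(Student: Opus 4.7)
The plan is to view Reed--Solomon encoding as a multi-point polynomial evaluation, apply Theorem~\ref{mpetime} to count the number of arithmetic operations, and then convert the arithmetic operation count into a bit-complexity bound by accounting for the cost of each operation in the underlying field $\mathbb{F}_{2^l}$.

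First I would fix $l = \lceil \log_2 n \rceil$ so that $2^l \ge n$ and $l = O(\log n)$; then an $(n,k,d)$ Reed--Solomon code with alphabet size $O(n)$ can be realized over $\mathbb{F}_{2^l}$. The encoding map sends a message in $\mathbb{F}_{2^l}^k$ to the polynomial $p$ of degree at most $k-1 \le n-1$ whose coefficients are the message symbols, and outputs the evaluations of $p$ at $n$ fixed points of $\mathbb{F}_{2^l}$. This is exactly the setting of Theorem~\ref{mpetime}, which gives $O(n \log^2 n)$ arithmetic operations to produce the codeword, of which $O(n \log n)$ are multiplications and the remaining $O(n \log^2 n)$ are additions (or subtractions).

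Next I would bound the bit cost of each arithmetic operation in $\mathbb{F}_{2^l}$, viewing elements of the field as polynomials over $\mathbb{F}_2$ of degree less than $l$ modulo a fixed irreducible polynomial. Additions (which are bit-wise XORs) cost $O(l) = O(\log n)$, while multiplications using the standard schoolbook polynomial multiplication followed by reduction cost $O(l^2) = O(\log^2 n)$. Putting the two pieces together, the total bit complexity is at most
\[
O(n \log n) \cdot O(\log^2 n) \;+\; O(n \log^2 n) \cdot O(\log n) \;=\; O(n \log^3 n),
\]
which is the claimed bound.

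I do not expect a genuine obstacle here since the argument is just a bookkeeping step on top of the cited multi-point evaluation result; the only minor care needed is to ensure the alphabet hypothesis ``size $O(n)$'' is compatible with choosing $l = O(\log n)$ so that Theorem~\ref{mpetime} applies, and to pick a concrete model (polynomial basis plus schoolbook multiplication) in which each field multiplication is bounded by $O(\log^2 n)$ bit operations, so that the final estimate does not rely on any faster multiplication algorithm.
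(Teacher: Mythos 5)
Your proposal is correct and follows essentially the same route as the paper: both regard the message as the coefficients of a degree-$(k-1)\le n-1$ polynomial, invoke Theorem~\ref{mpetime} for the $O(n\log^2 n)$ arithmetic operations (with $O(n\log n)$ multiplications), and then charge $O(\log n)$ per addition and $O(\log^2 n)$ per multiplication over a field of size $O(n)$ to get $O(n\log^3 n)$. Your extra remarks about fixing $l=\lceil\log_2 n\rceil$ and using schoolbook multiplication in the polynomial basis only make explicit what the paper leaves implicit.
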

\begin{proof}
Let's regard the message as $k$ coefficients of a degree $k-1 \leq n-1$ polynomial $p$. To evaluate $p$ on $n$ points it takes $O(n \log^2 n)$ arithmetic operations, including $O(n \log n)$ multiplications by Theorem \ref{mpetime}. Since the field size is $O(n)$, addition operation takes $O(\log n)$ and multiplication takes $O(\log^2 n)$. So the total running time is as stated.

\end{proof}

\section{$\varepsilon$-synchronization Strings and Circles with Alphabet Size $O(\varepsilon^{-2})$}
Now we show that using a non-uniform sample space together with the Lov\'asz Local lemma, we can use a randomized algorithm to construct an $\varepsilon$-synchronization string with alphabet of size $O(\varepsilon^{-2})$, and further we can construct an $\varepsilon$-synchronization circle.

\subsection{Synchronization String}
\begin{lemma}
(General Lov\'asz Local Lemma) Let $A_1,...,A_n$ be a set of bad events. $G(V,E)$ is a dependency graph for this set of events if $V=\{1,\dots,n\}$ and each event $A_i$ is mutually independent of all the events $\{A_j:(i,j)\notin E\}$.

If there exists $x_1,...,x_n\in [0,1)$ such that for all $i$ we have \[\Pr(A_i)\leq x_i\prod_{(i,j)\in E}(1-x_j)\]
Then the probability that none of these events happens is bounded by\[\Pr[\bigwedge_{i=1}^n \bar{A}_i]\geq \prod_{i=1}^n (1-x_i)>0\]
\end{lemma}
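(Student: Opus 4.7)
The plan is to establish the standard conditional strengthening of the hypothesis by induction, and then recover the stated bound via the chain rule. Concretely, I would prove the following intermediate claim: for every $i \in \{1,\dots,n\}$ and every $S \subseteq \{1,\dots,n\} \setminus \{i\}$,
\[
\Pr\left[A_i \;\middle|\; \bigwedge_{j \in S} \bar A_j\right] \leq x_i.
\]
Once this claim is in hand, the conclusion follows immediately from the chain rule:
\[
\Pr\left[\bigwedge_{i=1}^n \bar A_i\right] = \prod_{i=1}^n \Pr\left[\bar A_i \;\middle|\; \bigwedge_{j<i}\bar A_j\right] \geq \prod_{i=1}^n (1-x_i) > 0,
\]
applying the claim to each factor with $S = \{1,\dots,i-1\}$.

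The claim is proved by induction on $|S|$. The base case $|S|=0$ reduces to $\Pr[A_i] \leq x_i$, which is weaker than the hypothesis since $\prod_{(i,j)\in E}(1-x_j) \leq 1$. For the inductive step, partition $S = S_1 \cup S_2$ where $S_1 = \{j \in S : (i,j) \in E\}$ lists the dependency-graph neighbours of $i$ and $S_2 = S \setminus S_1$ lists the non-neighbours. If $S_1 = \emptyset$, mutual independence of $A_i$ from $\{A_j : j \in S\}$ gives the conditional probability exactly $\Pr[A_i] \leq x_i$. Otherwise I would write
\[
\Pr\left[A_i \;\middle|\; \bigwedge_{j \in S}\bar A_j\right] = \frac{\Pr\left[A_i \wedge \bigwedge_{j\in S_1}\bar A_j \;\middle|\; \bigwedge_{j \in S_2}\bar A_j\right]}{\Pr\left[\bigwedge_{j\in S_1}\bar A_j \;\middle|\; \bigwedge_{j \in S_2}\bar A_j\right]}
\]
and bound the two pieces separately. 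For the numerator I would drop the conjunction $\bigwedge_{j\in S_1}\bar A_j$ and then invoke mutual independence of $A_i$ from $\{A_j : j \in S_2\}$ to upper-bound it by $\Pr[A_i] \leq x_i \prod_{(i,j)\in E}(1-x_j)$. For the denominator I would enumerate $S_1 = \{j_1,\dots,j_r\}$, expand by the chain rule, and apply the inductive hypothesis to each factor (each conditioning set has size strictly less than $|S|$) to obtain the lower bound $\prod_{k=1}^r (1-x_{j_k}) \geq \prod_{(i,j)\in E}(1-x_j)$, where the last inequality uses $S_1 \subseteq N(i)$ together with $(1-x_j) \in (0,1]$. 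Dividing, the common factor $\prod_{(i,j)\in E}(1-x_j)$ cancels and we are left with $x_i$.

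The main technical care is in the numerator estimate: after discarding the $\bar A_j$ for $j \in S_1$, one must justify that $A_i$ is genuinely independent of the remaining conjunction $\bigwedge_{j \in S_2}\bar A_j$, which is exactly what mutual independence with respect to the dependency graph grants since $S_2$ contains no neighbour of $i$. It is also crucial that the inductive hypothesis be stated uniformly over all indices $i$ and all subsets $S$ below the current size, so that it may be applied to each conditional probability appearing in the chain-rule expansion of the denominator; this is precisely why the target has to be strengthened from the raw conclusion $\Pr[A_i] \leq x_i$ to a statement that is robust under arbitrary conditioning on bad-event complements. The remainder of the argument is routine algebra.
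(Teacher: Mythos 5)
Your proposal is the standard inductive proof of the asymmetric Lov\'asz Local Lemma (the conditional strengthening $\Pr[A_i \mid \bigwedge_{j\in S}\bar A_j]\leq x_i$, induction on $|S|$, splitting the conditioning set into neighbours and non-neighbours, and concluding by the chain rule), and it is correct; the only point left implicit is the routine verification that all conditioning events have positive probability, which itself follows from the same induction since each factor $\Pr[\bar A_i\mid\cdot]\geq 1-x_i>0$. The paper does not prove this lemma at all --- it is quoted as a known tool --- so there is no paper proof to compare against; your argument is the canonical one and fills that role adequately.
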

\begin{theorem}
\label{syncStr}
$\forall \varepsilon\in(0,1),n\geq 1$, there exists an $\varepsilon$-synchronization string $S$ of length $n$ over alphabet $\Sigma$ of size $\Theta(\varepsilon^{-2})$.
\end{theorem}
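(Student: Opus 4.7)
The plan is to implement the non-uniform sampling scheme described in the introduction and apply the General Lovász Local Lemma stated just above. Let $t = \lceil 1/\varepsilon^2 \rceil$ and let $\Sigma$ be an alphabet of size $m = C\varepsilon^{-2}$ for a sufficiently large absolute constant $C$. Sample $S$ from left to right: take $S_1$ uniform on $\Sigma$, and for $i>1$ take $S_i$ uniform on the $m - \min(t-1,i-1) = \Theta(\varepsilon^{-2})$ symbols of $\Sigma$ that differ from $S_{i-\min(t-1,i-1)},\dots,S_{i-1}$. Under this rule every window of $S$ of length at most $t$ consists of distinct symbols, so for every triple $i<j<k$ with $k-i+1 \leq t$ the two halves $S[i,j]$ and $S[j+1,k]$ have disjoint symbol sets; hence $LCS = 0$ and $ED(S[i,j],S[j+1,k]) = k-i+1 > (1-\varepsilon)(k-i)$. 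Short intervals are therefore handled deterministically, regardless of the random choices.

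For each triple $i<j<k$ with $k-i+1 > t$, introduce the bad event $A_{i,j,k}$ that $ED(S[i,j], S[j+1,k]) \leq (1-\varepsilon)(k-i)$, equivalently $LCS(S[i,j], S[j+1,k]) \geq \frac{1+\varepsilon(k-i)}{2}$. To bound $\Pr[A_{i,j,k}]$ I would union-bound over all pairs of index subsets $I \subseteq [i,j]$ and $I' \subseteq [j+1,k]$ of a common size $\ell \geq \varepsilon(k-i)/2$, describing a candidate matched common subsequence. The key observation is that under the non-uniform sampling, conditioned on any assignment to positions before a given coordinate, the conditional marginal at that coordinate is uniform on a set of size at least $m-t+1 = \Omega(\varepsilon^{-2})$. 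This lets me bound the probability that any specific $\ell$-matching succeeds by $(m-t+1)^{-\ell}$, and then $\Pr[A_{i,j,k}] \leq \sum_{\ell} \binom{j-i+1}{\ell}\binom{k-j}{\ell} (m-t+1)^{-\ell}$. A short calculation using $\binom{L}{\ell} \leq (eL/\ell)^\ell$ then shows that choosing $C$ large enough makes this quantity exponentially small in $\varepsilon(k-i)$.

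Finally, I declare $A_{i,j,k}$ adjacent in the dependency graph to $A_{i',j',k'}$ whenever the outer intervals $[i,k]$ and $[i',k']$ overlap. The critical and most delicate step is to verify that $A_{i,j,k}$ is mutually independent of the sub-collection of events whose outer intervals are disjoint from $[i,k]$, despite the backward dependence introduced by the non-uniform sampling. Following the strategy of \cite{haeupler2017synchronization}, I would condition on the symbols outside $[i,k]$ and observe that the probability bound derived above is in fact a pointwise-in-history bound, since it uses only the worst-case marginal $1/(m-t+1)$; this is exactly what the LLL mutual independence condition requires. Once this is in hand, the number of triples whose outer interval has length $L'$ and meets $[i,k]$ is $O((k-i+1)(L')^2)$, and assigning weights $x_{i',j',k'}$ that decay exponentially in $L'$ lets the LLL condition $\Pr[A_{i,j,k}] \leq x_{i,j,k} \prod_{(i',j',k') \text{ adj}} (1-x_{i',j',k'})$ be verified uniformly in the triple. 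The General Lovász Local Lemma then produces, with positive probability, an assignment of $S$ for which no $A_{i,j,k}$ occurs, yielding the desired $\varepsilon$-synchronization string. The main obstacle, as highlighted, is the mutual-independence verification under the non-uniform distribution; everything else is a routine adaptation of the approach of \cite{haeupler2017synchronization} with the improved marginal plugged in.
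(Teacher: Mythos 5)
Your overall route is the same as the paper's: the left-to-right sampling in which each symbol avoids the previous $\min(t-1,i-1)$ symbols, the observation that windows of length at most $t$ are automatically distinct, the union bound over candidate matchings using the worst-case conditional marginal $1/(m-t+1)$, and the General Lov\'asz Local Lemma with weights decaying exponentially in the interval length. The gap is precisely at the step you yourself call critical. Your justification --- ``condition on the symbols outside $[i,k]$; the bound is pointwise-in-history, which is exactly what the LLL mutual independence condition requires'' --- does not work as stated, for two reasons. First, the lemma as stated needs genuine \emph{mutual independence} of $A_{i,j,k}$ from the events on disjoint intervals; an upper bound on conditional probabilities is a different hypothesis (there is a variant of the LLL that accepts bounds of the form $\Pr[A\mid \bigcap_{j\in S}\bar A_j]\le x\prod(1-x_j)$, but then that conditional bound must actually be proved). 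Second, and more seriously, your $(m-t+1)^{-\ell}$ bound is pointwise only in the \emph{past}. Non-neighbouring events live on both sides of $[i,k]$, so you must condition on symbols occurring \emph{after} position $k$ as well, and because the sampling is sequential this tilts the conditional law of $S[i..k]$: given the suffix, the symbols $S[k-t+2],\dots,S[k]$ are constrained to avoid the first $t-1$ suffix symbols, and under this conditioning the marginals inside the interval are no longer uniform over at least $m-t+1$ values, so the per-matching bound is no longer justified. A crude repair via $\Pr[A\mid V]\le\Pr[A]/\Pr[V]$, where $V$ is the compatibility event, loses a factor that can be exponentially large in $t^2/m=\Theta(\varepsilon^{-2})$, so this is not a cosmetic issue.

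The paper closes exactly this hole differently: it proves \emph{exact} constancy, not an upper bound. By a relabeling argument (the continuation of the process depends on the prefix only through its last $\min(i-1,t-1)$ symbols, and badness is invariant under any bijective relabeling of the alphabet), the probability that $[i,k]$ is bad conditioned on \emph{any} prefix $S[1,i-1]$ is one fixed number $p'$; hence the badness of $[i,k]$ is independent of the badness of all earlier disjoint intervals. The same argument applied to the intervals lying entirely after $[i,k]$ shows that their \emph{joint} badness distribution is independent of everything determined by the symbols preceding them, in particular of the badness of $[i,k]$ and of the earlier intervals; combining the two statements yields the mutual independence required by the LLL as stated, with no conditioning on future symbols ever needed. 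To make your proposal complete you would have to replace the ``pointwise-in-history'' claim by an argument of this type (or genuinely establish the conditional bound under two-sided conditioning); the rest of your outline, including the probability estimate and the weight choice, is a routine match to the paper's proof.
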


\begin{Proof}
Suppose $|\Sigma|=c_1\varepsilon^{-2}$ where $c_1$ is a constant. Let $t=c_2\varepsilon^{-2}$ and $0<c_2<c_1$. The sampling algorithm is as follows:
\begin{enumerate}
  \item Randomly pick $t$ different symbols from $\Sigma$ and let them be the first $t$ symbols of $S$. If $t\geq n$, we just pick $n$ different symbols.
  \item For $t+1\leq i\leq n$, we pick the $i$th symbol $S[i]$ uniformly randomly from $\Sigma\setminus\{S[i-1],\dots,S[i-t+1]\}$
\end{enumerate}
Now we prove that there's a positive probability that $S$ contains no \textit{bad} interval $S[i,k]$ which violates the requirement that $ED(S[i,j],S[j+1,k])>(1-\varepsilon)(k-i)$ for any $i<j<k$. This requirement is equivalent to $LCS(S[i,j],S[j+1,k])< \frac{\varepsilon}{2}(k-i)$.

Notice that for $k-i\leq t$, the symbols in $S[i,k]$ are completely distinct. Hence we only need to consider the case where $k-i>t$. First, let's upper bound the probability that an interval is bad:
\begin{align*}
\Pr [\text{interval I of length } l \text{ is bad}]&\leq\binom{l}{\varepsilon l}(|\Sigma|-t)^{-\frac{\varepsilon l}{2}}\\
&\leq\frac{el}{\varepsilon l}^{\varepsilon l} (|\Sigma|-t)^{-\frac{\varepsilon l}{2}}\\
&\leq(\frac{\varepsilon\sqrt{|\Sigma|-t}}{e})^{-\varepsilon l}\\
& = C^{-\varepsilon l}
\end{align*}
The first inequality holds because if the interval is bad, then it has to contain a repeating sequence $a_1a_2\dots a_pa_1a_2\dots a_p$ where $p$ is at least $\frac{\varepsilon l}{2}$. Such sequence can be specified via choosing $\varepsilon l$ positions in the interval and the probability that a given sequence is valid for the string in this construction is at most $(|\Sigma|-t)^{-\frac{\varepsilon l}{2}}$. The second inequality comes from Stirling's inequality.

The inequality above indicates that the probability that an interval of length $l$ is bad can be upper bounded by $C^{-\varepsilon l}$, where $C$ is a constant and can be arbitrarily large by modifying $c_1$ and $c_2$.

Now we use general Lov\'asz local lemma to show that $S$ contains no bad interval with positive probability. First we'll show the following lemma.
\begin{lemma} The badness of interval $I=S[i,j]$ is mutually independent of the badness of all intervals that do not intersect with $I$.
\end{lemma}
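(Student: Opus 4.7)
The plan is to combine two structural features of the sampling procedure. First, $S$ is a Markov chain of order $t-1$: for $k>t$, the law of $S[k]$ given the entire past depends only on the preceding window $(S[k-1],\dots,S[k-t+1])$. Second, the whole procedure is equivariant under any permutation $\pi$ of $\Sigma$ (every step samples uniformly from a set defined purely by equality constraints), whereas the bad event $A_I$ is invariant under $\pi$ since edit distance depends only on the equality pattern of symbols.

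The key intermediate step is a symmetry sublemma. For an interval $I=[i,j]$, define the \emph{left interface} $T_L:=(S[i-t+1],\dots,S[i-1])$ and the \emph{right interface} $T_R:=(S[j-t+2],\dots,S[j])$. By the sampling rule any $t$ consecutive symbols of $S$ are distinct, so $T_L$ and $T_R$ are always ordered $(t-1)$-tuples of distinct symbols; any two such tuples lie in a single orbit of the alphabet-permutation action. I would show that $\Pr[A_I\mid T_L=t_L]=\Pr[A_I]$ for every valid $t_L$. Setting $W:=(S[i],\dots,S[j])$, equivariance under $\pi$ gives $\Pr[W=w\mid T_L=t_L]=\Pr[W=\pi(w)\mid T_L=\pi(t_L)]$; summing over $w$ with $A_I(w)=1$ and using $A_I(\pi(w))=A_I(w)$ yields $\Pr[A_I\mid T_L=t_L]=\Pr[A_I\mid T_L=\pi(t_L)]$, and transitivity of the orbit action closes out the constancy claim. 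A symmetric argument on the forward chain out of $T_R$ shows $\Pr[B\mid T_R=t_R]$ is constant in $t_R$ for every permutation-invariant event $B$ determined by $S[j+1:n]$.

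Given the sublemma, the mutual-independence step is routine bookkeeping. Any atom of the $\sigma$-algebra generated by $\{A_J:J\cap I=\emptyset\}$ can be written as $B_L\cap B_R$, where $B_L$ is a permutation-invariant event about the prefix $S_L:=S[1:i-1]$ and $B_R$ is a permutation-invariant event about the suffix $S_R:=S[j+1:n]$. The Markov factorization yields
\[
\Pr[S_L=s_L,\,W=w,\,S_R=s_R] \;=\; \Pr[S_L=s_L]\cdot p(w\mid T_L(s_L))\cdot q(s_R\mid T_R(w)).
\]
Evaluating $\Pr[A_I\cap B_L\cap B_R]$ by summing in the order $s_R,w,s_L$, the innermost sum $\sum_{s_R\in B_R} q(s_R\mid T_R(w))=\Pr[B_R\mid T_R(w)]$ collapses by the sublemma to the constant $\Pr[B_R]$; the next sum $\sum_{w\in A_I} p(w\mid T_L(s_L))=\Pr[A_I\mid T_L(s_L)]$ similarly collapses to $\Pr[A_I]$; and the outer sum is $\sum_{s_L\in B_L}\Pr[S_L=s_L]=\Pr[B_L]$. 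The same computation without the $A_I$ constraint gives $\Pr[B_L\cap B_R]=\Pr[B_L]\Pr[B_R]$, whence $\Pr[A_I\cap B_L\cap B_R]=\Pr[A_I]\Pr[B_L\cap B_R]$, establishing mutual independence.

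The main obstacle is the symmetry sublemma: one must recognize that both interfaces $T_L$ and $T_R$ always lie in a single alphabet-permutation orbit and use this, together with the permutation invariance of $A_I$ and $B_R$, to collapse each conditional probability to a constant. Once the sublemma is in place, the Markov factorization mechanically handles both sides of $I$ simultaneously and the lemma drops out.
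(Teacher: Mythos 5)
Your proof is correct and takes essentially the same route as the paper: the permutation-equivariance of the sampling rule combined with the fact that each new symbol depends only on the previous $t-1$ symbols is exactly the paper's relabeling argument showing that $\Pr[b=1 \mid S[1,i-1]=\tilde S]$ is the same constant for every valid prefix. Your write-up merely makes explicit, via the right-interface symmetry and the Markov factorization (and implicitly using that the only intervals whose badness is nontrivial have length greater than $t$, so the right interface lies inside $I$), the suffix-side step that the paper dismisses with ``similarly.''
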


\begin{proofof}{lemma 3.3}
Suppose the intervals before $I$ that do not intersect with $I$ are $I_1,\dots,I_m$, and those after $I$ are $I_1',\dots,I_{m'}'$. We denote the indicator variables of each interval being bad as $b$, $b_k$ and $b_{k'}'$. That is,
\[
b=
\begin{cases}
0 &\text{if $I$ is not bad}\\
1 &\text{if $I$ is bad}
\end{cases}
,\quad b_k=
\begin{cases}
0 &\text{if $I_k$ is not bad}\\
1 &\text{if $I_k$ is bad}
\end{cases}
,\quad b_{k'}'=
\begin{cases}
0 &\text{if $I_{k'}'$ is not bad}\\
1 &\text{if $I_{k'}'$ is bad}
\end{cases}
\]

First we prove that there exists $p\in (0,1)$ such that $\forall x_1,x_2,\dots,x_m\in\{0,1\}$,\[\Pr[b=1|b_k=x_k, k=1,\dots,m]=p\]

According to our construction, we can see that for any fixed prefix $S[1,i-1]$, the probability that $I$ is bad is a fixed real number $p'$. That is, \[\forall \text{ valid }\tilde{S}\in\Sigma^{i-1},\Pr[b=1|S[1,i-1]=\tilde{S}]=p'\]
This comes from the fact that, the sampling of the symbols in $S[i, k]$ only depends on the previous $h=min\{i-1, t-1\}$ different symbols, and up to a relabeling these $h$ symbols are the same $h$ symbols (e.g., we can relabel them as $\{1, \cdots, h\}$ and the rest of the symbols as $\{h+1, \cdots, |\Sigma|\}$). On the other hand the probability that $b=1$ remains unchanged under any relabeling of the symbols, since if two sampled symbols are the same, they will stay the same; while if they are different, they will still be different. Thus we have:
\begin{align*}
&\Pr[b=1|b_k=x_k, i=1,\dots,m]\\
=&\dfrac{\Pr[b=1,b_k=x_k, i=1,\dots,m]}{\Pr[b_k=x_k, k=1,\dots,m]}\\
=&\dfrac{\sum_{\tilde{S}}\Pr[b=1,S[1,i-1]=\tilde{S}]}{\sum_{\tilde{S}}\Pr[S[1,i-1]=\tilde{S}]}\\
=&\sum_{\tilde{S}}(\dfrac{\Pr[b=1,S[1,i-1]=\tilde{S}]}{\Pr[S[1,i-1]=\tilde{S}]}\dfrac{\Pr[S[1,i-1]=\tilde{S}]}{\sum_{\tilde{S}'}\Pr[S[1,i-1]=\tilde{S}']})\\
=&\sum_{\tilde{S}}(\Pr[b=1|S[1,i-1]=\tilde{S}]\dfrac{\Pr[S[1,i-1]=\tilde{S}]}{\sum_{\tilde{S}'}\Pr[S[1,i-1]=\tilde{S}']})\\
=&p'\sum_{\tilde{S}}\dfrac{\Pr[S[1,i-1]=\tilde{S}]}{\sum_{\tilde{S}'}\Pr[S[1,i-1]=\tilde{S}']}\\
=&p'
\end{align*}
In the equations, $\tilde{S}$ indicates all valid string that prefix $S[1,i-1]$ can be such that $b_k=x_k, k=1,\dots,m$. Hence, $b$ is independent of $\{b_k,k=1,\dots,m\}$.
Similarly, we can prove that the joint distribution of $\{b_{k'}',k'=1,\dots,m'\}$ is independent of that of $\{b,b_k,k=1,\dots,m\}$. Hence $b$ is independent of $\{b_k,b_{k'}',k=1,\dots,m,k'=1,\dots,m'\}$, which means, the badness of interval $I$ is mutually independent of the badness of all intervals that do not intersect with $I$. $\square$
\end{proofof}

Obviously, an interval of length $l$ intersects at most $l+l'$ intervals of length $l'$. To use Lov\'asz local lemma, we need to find a sequence of real numbers $x_{i,k}\in[0.1)$ for intervals $S[i,k]$ for which
\[\Pr[S[i,k]\text{is bad}]\leq x_{i,k}\prod_{S[i,k]\cap S[i',k']\neq\emptyset}(1-x_{i',k'})\]

The rest of the proof is the same as that of Theorem 5.7 in \cite{haeupler2017synchronization}.

We propose $x_{i,k}=D^{-\varepsilon (k-i)}$ for some constant $D\geq 1$. Hence we only need to find a constant $D$ such that for all $S[i,k]$,
\[C^{-\varepsilon(k-i)}\leq D^{-\varepsilon (k-i)}\prod_{l=t}^n[1-D^{-\varepsilon l}]^{l+(k-i)}\]
That is, for all $l'\in\{1,...,n\}$,
\[C^{-l'}\leq D^{-l'}\prod_{l=t}^n[1-D^{-\varepsilon l}]^{\frac{l+l'}{\varepsilon}}\]
which means that
\[C\geq \dfrac{D}{\prod_{l=t}^n[1-D^{-\varepsilon l}]^{\frac{l/l'+1}{\varepsilon}}}\]
Notice that the righthand side is maximized when $n=\infty, l'=1$. Hence it's sufficient to show that
\[C\geq \dfrac{D}{\prod_{l=t}^{\infty}[1-D^{-\varepsilon l}]^{\frac{l+1}{\varepsilon}}}\]
Let $L=\max_{D>1}\dfrac{D}{\prod_{l=t}^{\infty}[1-D^{-\varepsilon l}]^{\frac{l+1}{\varepsilon}}}$. We only need to guarantee that $C>L$.

We claim that $L=\Theta(1)$. Since that $t=c_2\varepsilon^{-2}=\omega(\frac{\log\frac{1}{\varepsilon}}{\varepsilon})$,
\begin{align}
\dfrac{D}{\prod_{l=t}^{\infty}[1-D^{-\varepsilon l}]^{\frac{l+1}{\varepsilon}}}&<\dfrac{D}{\prod_{l=t}^{\infty}[1-\frac{l+1}{\varepsilon}D^{-\varepsilon l}]}\\
&<\dfrac{D}{1-\sum_{l=t}^{\infty}\frac{l+1}{\varepsilon}D^{-\varepsilon l}}\\
&=\dfrac{D}{1-\frac{1}{\varepsilon}\sum_{l=t}^{\infty}(l+1)D^{-\varepsilon l}}\\
&=\dfrac{D}{1-\frac{1}{\varepsilon}\frac{2tD^{-\varepsilon t}}{(1-D^{-\varepsilon})^2}}\\
&=\dfrac{D}{1-\frac{2}{\varepsilon^3}\frac{D^{-\frac{1}{\varepsilon}}}{(1-D^{-\varepsilon})^2}}
\end{align}

Inequality $(1)$ comes from the fact that $(1-x)^{\alpha}>1-\alpha x$, $(2)$ comes from he fact that $\prod_{i=1}^{\infty}(1-x_i)\geq 1-\sum_{i=1}^{\infty}x_i$ and $(3)$ is a result from $\sum_{l=t}^{\infty}(l+1)x^l=\frac{x^t(1+t-tx)}{(1-x)^2}<\frac{2tx^t}{(1-x)^2},x<1$.

We can see that for $D=7$, $\max_{\varepsilon}\{\frac{2}{\varepsilon^3}\frac{D^{-\frac{1}{\varepsilon}}}{(1-D^{-\varepsilon})^2}\}<0.9$. Therefore (5) is bounded by a constant, which means $L=\Theta(1)$ and the proof is complete.
\end{Proof}

\begin{lemma}
There exists a randomized algorithm s.t. for any $\varepsilon\in(0,1)$ and any $ n\in \mathbb{N} $, it can construct an $\varepsilon$-synchronization string of length $n$ over alphabet of size $O(\varepsilon^{-2})$ in expected time $O(n^5\log n)$.
\end{lemma}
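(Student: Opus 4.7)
The plan is to turn the existence proof of Theorem~\ref{syncStr} into an algorithm by invoking an algorithmic version of the Lov\'asz Local Lemma. Since the sampling distribution in Theorem~\ref{syncStr} is \emph{not} a product distribution (the $i$-th symbol depends on the previous $t-1$ symbols), the standard Moser--Tardos framework does not directly apply; instead I would use the generalization to non-uniform/lopsided settings of Haeupler--Saha--Srinivasan~\cite{haeupler2011new}, where resampling is defined by re-drawing the variables in the support of the violated event from the conditional distribution given the other variables. In our case, the variables are the positions of $S$ and the ``event'' that an interval $S[i,k]$ is bad naturally corresponds to re-running the non-uniform sampling procedure on positions $i,\dots,k$ (conditioned on the earlier symbols).

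The algorithm is therefore: (i) run the non-uniform sampler of Theorem~\ref{syncStr} once to produce an initial $S\in\Sigma^n$; (ii) scan all triples $(i,j,k)$ with $i<j<k$ to find a bad interval $S[i,k]$, i.e.\ one with $LCS(S[i,j],S[j+1,k])\geq \frac{\varepsilon}{2}(k-i)$ for some midpoint $j$; (iii) if one is found, resample the positions $i,\dots,k$ according to the original conditional rule and return to (ii); otherwise output $S$. Correctness is immediate once the procedure terminates. Because Lemma~3.3 of the existence proof already verified mutual independence from disjoint intervals, and Theorem~\ref{syncStr} supplied weights $x_{i,k}=D^{-\varepsilon(k-i)}$ satisfying the LLL condition, the non-uniform Moser--Tardos analysis bounds the expected number of resamplings by
\[
\sum_{1\leq i<k\leq n}\frac{x_{i,k}}{1-x_{i,k}}\;=\;O\!\left(\sum_{i<k} D^{-\varepsilon(k-i)}\right)\;=\;O(n/\varepsilon)\;=\;O(n),
\]
treating $\varepsilon$ as a constant relative to $n$ (and the same bound, up to constants, works for $\varepsilon\to 0$ since $D>1$ is fixed).

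For the per-iteration cost, locating a bad interval requires, in the worst case, computing $LCS(S[i,j],S[j+1,k])$ for all $O(n^3)$ triples, which via standard dynamic programming takes $O((k-i)^2)$ time per triple; summing yields $O(n^4)$ word-operations per iteration, and since symbols/indices use $O(\log n)$ bits, each such operation costs an extra $O(\log n)$ bit-time. Multiplying by the $O(n)$ expected number of resampling rounds gives the claimed $O(n^5\log n)$ expected running time. The main obstacle I anticipate is the non-uniform resampling step: one must check that redrawing positions $i,\dots,k$ from the conditional distribution of the original sampler is a legitimate Moser--Tardos resampling move for the dependency graph used in Theorem~\ref{syncStr}. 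This reduces to reverifying Lemma~3.3 in the conditional setting, namely that the badness of $S[i,k]$, given any fixed prefix $S[1,i-1]$, is independent of the badness of all intervals disjoint from $[i,k]$ that lie entirely outside the window of length $t$ following position $k$; the relabeling symmetry argument already used in Theorem~\ref{syncStr} extends to this case, so no new combinatorial idea is needed, only a careful bookkeeping of the dependency neighborhood.
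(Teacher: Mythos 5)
There is a genuine gap at exactly the point you flag as ``the main obstacle'': your proposal never resolves the fact that the sampler of Theorem~\ref{syncStr} is not a product distribution, and the fix you sketch does not work as stated. The framework of \cite{moser2010constructive} and its extension \cite{haeupler2011new} are formulated for events determined by \emph{independent} random variables; neither gives a general license to ``resample the violated event from the conditional distribution given the other variables'' in a non-product space. Concretely, if you redraw positions $i,\dots,k$ conditioned only on the prefix $S[1,i-1]$, the symbols at positions $k+1,\dots,k+t-1$ remain the ones that were chosen to avoid the \emph{old} values in their length-$t$ window, so after resampling two symbols within distance $t$ of each other across the boundary at $k$ may coincide. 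This breaks the invariant that every window of $t$ consecutive symbols is distinct, which the existence proof used to dispose of all short intervals (only long intervals are LLL events), so termination of your loop no longer certifies an $\varepsilon$-synchronization string; moreover the configuration after such a resampling is not distributed as a fresh sample conditioned on the outside, so the $O(n)$ bound on expected resamplings does not follow from the cited results. Reverifying Lemma~3.3 ``in the conditional setting'' does not repair this, because the problem is with the legitimacy of the resampling step, not with the dependency graph.

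The paper closes this gap with a different idea: it implements the non-uniform sampler as a deterministic function of genuinely independent variables $P_1,\dots,P_n$, where $P_i$ is uniform on $\{1,\dots,|\Sigma|-h\}$ and the $i$-th symbol is obtained by reordering $\Sigma$ so that the previous $h=\min\{i-1,t-1\}$ symbols come first and then taking the $(h+P_i)$-th symbol. By the relabeling symmetry, the badness of $S[i,k]$ depends only on $(P_i,\dots,P_k)$, so the standard product-space algorithms of \cite{moser2010constructive,haeupler2011new} apply verbatim (resampling the $P$-variables of a bad interval and re-deriving symbols keeps all invariants and the correct distribution). Your proposal would need this reduction, or a bona fide resampling-oracle argument for the conditional sampler, to be complete; the running-time accounting you give is otherwise in the same spirit as the paper's, though your claim that $\sum_{i<k} x_{i,k}/(1-x_{i,k})=O(n)$ uniformly in $\varepsilon$ should really be $O(n/\varepsilon)$, which the paper handles by assuming $\varepsilon>1/\sqrt{n}$ without loss of generality.
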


\begin{proof}

The algorithm is similar to that of Lemma 5.8 in \cite{haeupler2017synchronization}, using algorithmic Lov\'asz Local lemma \cite{moser2010constructive} and the extension in \cite{haeupler2011new}. It starts with a string sampled according to the sampling algorithm in the proof of Theorem \ref{syncStr}, over alphabet $\Sigma$ of size $C\varepsilon^{-2}$ for some large enough constant $C$.\ Then the algorithm checks all $O(n^2)$ intervals for a violation of the requirements for $\varepsilon$-synchronization string.\ If a bad interval is found, this interval is re-sampled by randomly choosing every symbol s.t. each one of them is different from the previous $t-1$ symbols, where $t = c'\varepsilon^{-2}$ with $c'$ being a constant smaller than $C$.

One subtle point of our algorithm is the following. Note that in order to apply the algorithmic framework of \cite{moser2010constructive} and \cite{haeupler2011new}, one needs the probability space to be sampled from $n$ independent random variables ${\cal P}=\{P_1, \cdots, P_n\}$ so that each event in the collection ${\cal A}=\{A_1, \cdots, A_m\}$ is determined by some subset of $\cal P$. Then, when some bad event $A_i$ happens, one only resamples the random variables that decide $A_i$. Upon first look, it may appear that in our application of the Lov\'asz Local lemma, the sampling of the $i$'th symbol depends on the the previous $h=min\{i-1, t-1\}$ symbols, which again depend on previous symbols, and so on. Thus the sampling of the $i$'th symbol depends on the sampling of all previous symbols.\ However, we can implement our sampling process as follows: for the $i$'th symbol we first independently generate a random variable $P_i$ which is uniform over $\{1, 2, \cdots, |\Sigma|-h\}$, then we use the random variables $\{P_1, \cdots, P_n\}$ to decide the symbols, in the following way. Initially we fix some arbitrary order of the symbols in $\Sigma$, then for $i=1, \cdots, n$, to get the $i$'th symbol, we first reorder the symbols $\Sigma$ so that the previous $h$ chosen symbols are labeled as the first $h$ symbols in $\Sigma$, and the rest of the symbols are ordered in the current order as the last $|\Sigma|-h$ symbols. We then choose the $i$'th symbol as the $(h+P_i)$'th symbol in this new order. In this way, the random variables $\{P_1, \cdots, P_n\}$ are indeed independent, and the $i$'th symbol is indeed chosen uniformly from the $|\Sigma|-h$ symbols excluding the previous $h$ symbols. Furthermore, the event of any interval $S[i, k]$ being bad only depends on the random variables $(P_i, \cdots, P_k)$ since no matter what the previous $h$ symbols are, they are relabeled as $\{1, \cdots, h\}$ and the rest of the symbols are labeled as $\{h+1, \cdots, |\Sigma|\}$. From here, the same sequence of $(P_i, \cdots, P_k)$ will result in the same behavior of $S[i, k]$ in terms of which symbols are the same. We can thus apply the same algorithm as in \cite{haeupler2017synchronization}.

Note that the time to get the $i$'th symbol from the random variables $\{P_1, \cdots, P_n\}$ is $O(n \log \frac{1}{\varepsilon})$ since we need $O(n)$ operations each on a symbol of size $C\varepsilon^{-2}$. Thus resampling each interval takes $O(n^2 \log \frac{1}{\varepsilon})$ time since we need to resample at most $n$ symbols.  For every interval, the edit distance can be computed using the Wagner-Fischer dynamic programming within $O(n^2 \log \frac{1}{\varepsilon})$ time. \cite{haeupler2011new} shows that the expected number of re-sampling is $O(n)$. The algorithm will repeat until no bad interval can be found.  Hence the overall expected running time is $O(n^5 \log \frac{1}{\varepsilon})$.

Note that without loss of generality we can assume that $\varepsilon>
1/\sqrt{n}$ because for smaller errors we can always use the indices
directly, which have alphabet size $n$. So the overall expected running time is $O(n^5 \log n)$.
\end{proof}
\subsection{Synchronization circle}
We now construct an $\varepsilon$-synchronization circle  using Theorem \ref{syncStr}.
\begin{theorem}
\label{syncCircle}
For every $  \varepsilon\in(0,1), n \in \mathbb{N}$, there exists an $\varepsilon$-synchronization circle $S$ of length $n$ over alphabet $\Sigma$ of size $\Theta(\varepsilon^{-2})$.
\end{theorem}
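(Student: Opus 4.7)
The construction of $S$ combines two $\varepsilon'$-synchronization strings $S^{(1)}, S^{(2)}$ of length $\lceil n/2 \rceil$ and $\lfloor n/2 \rfloor$ respectively, over two disjoint alphabets $\Sigma_1, \Sigma_2$ each of size $\Theta(\varepsilon^{-2})$ (guaranteed by Theorem~\ref{syncStr}), where $\varepsilon' = \varepsilon/2$. The circle $S = S^{(1)} S^{(2)}$ is taken over the union $\Sigma = \Sigma_1 \cup \Sigma_2$, which still has size $\Theta(\varepsilon^{-2})$. For small $n$ (say $n \leq 4/\varepsilon$, so $n = O(\varepsilon^{-2})$) we instead set $S = 1, 2, \dots, n$; every rotation is a permutation, so disjoint substrings have LCS $0$ and the sync-circle property holds trivially.

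To verify the circle property for large $n$, fix an arbitrary rotation $T$ of $S$ and observe that it decomposes into at most three alphabet-homogeneous blocks; without loss of generality $T = A_1 B A_2$ where $B$ is all of $S^{(2)}$ and $A_1, A_2$ are a suffix and a prefix of $S^{(1)}$. For any interval $T[i,k]$ and split $j$, I case-split on how many alphabet boundaries the interval crosses. If it crosses none, the interval is a substring of $S^{(1)}$ or $S^{(2)}$ and the $\varepsilon'$-sync property applies directly. If it crosses exactly one boundary, then $T[i,k] = P \cdot Q$ with $P, Q$ over disjoint alphabets; a split $j$ inside $P$ yields $LCS(T[i,j], T[j+1,k]) = LCS(P_1, P_2)$ because matches against $Q$ are impossible, and this is bounded by applying the $\varepsilon'$-sync property to the string containing $P$.

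The main obstacle is the two-boundary case, where $T[i,k] = X \cdot B \cdot Z$ with $X, Z \in \Sigma_1^{*}$ and $B \in \Sigma_2^{*}$ equal to all of $S^{(2)}$. If the split $j$ falls in $B$, the key combinatorial observation is that in $L = X B_1$ the $\Sigma_1$ block precedes the $\Sigma_2$ block while in $R = B_2 Z$ the order is reversed, so any common subsequence of $L$ and $R$ must lie entirely in one alphabet; hence $LCS(L,R) = \max(LCS(X,Z),\, LCS(B_1,B_2))$. The $\Sigma_2$-term is controlled by the $\varepsilon'$-sync of $S^{(2)}$; the $\Sigma_1$-term uses that $X$ is a suffix and $Z$ a prefix of $S^{(1)}$ with $|X|+|Z|\leq |S^{(1)}|$, so $LCS(X,Z) \leq LCS(S^{(1)}[1,|Z|],\, S^{(1)}[|Z|+1,|S^{(1)}|])$, which is bounded by the $\varepsilon'$-sync of $S^{(1)}$. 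If $j$ falls in $X$ (the $j \in Z$ case is symmetric), then $L = X_1 \in \Sigma_1^{*}$ and $R = X_2 B Z$; ignoring $B$ gives $LCS(L,R) = LCS(X_1, X_2 Z) \leq LCS(X_1, X_2) + LCS(X_1, Z)$, and each summand is bounded by embedding the relevant pieces into contiguous substrings of $S^{(1)}$ (namely the interval $X$ split at $|X_1|$, and the interval $S^{(1)}[1,|S^{(1)}|]$ split at $|Z|$) and applying the $\varepsilon'$-sync property. Summing the bounds with $\varepsilon' = \varepsilon/2$ and using $n = \Omega(1/\varepsilon)$ to absorb the lower-order additive constants yields $ED(T[i,j], T[j+1,k]) > (1-\varepsilon)(k-i)$ in every sub-case, so every rotation $T$ is an $\varepsilon$-synchronization string and $S$ is an $\varepsilon$-synchronization circle.
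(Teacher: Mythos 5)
Your proposal is correct and follows essentially the same route as the paper: concatenate two synchronization strings over disjoint alphabets of size $\Theta(\varepsilon^{-2})$, consider an arbitrary rotation, and do a case analysis on how many alphabet boundaries the interval crosses, using disjointness to force any common subsequence into a single alphabet and then invoking the synchronization property (plus LCS monotonicity) of the constituent strings. Your only deviations --- working with $\varepsilon/2$-synchronization strings for slack and treating tiny $n$ separately with the identity string --- are harmless bookkeeping choices rather than a different method.
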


\begin{proof}
%Without loss of generality, let $n$ be even.
First, by Theorem \ref{syncStr}, we can have two $\varepsilon$-synchronization strings: $S_1$ with length $\lceil \frac{n}{2}\rceil$ over $\Sigma_1$ and $S_2$ with length $\lfloor\frac{n}{2}\rfloor$ over $\Sigma_2$. Let $\Sigma_1\cap\Sigma_2=\emptyset$ and $|\Sigma_1|=|\Sigma_2|= O(\varepsilon^{-2})$. Let $S$ be the concatenation of $S_1$ and $S_2$. Then $S$ is over alphabet $\Sigma=\Sigma_1\cup\Sigma_2$ whose size is $O(\varepsilon^{-2})$. Now we prove that $S$ is an $\varepsilon$-synchronization circle.

$\forall 1\leq m\leq n$, consider  string $S'=s_m,s_{m+1},\dots,s_n,s_1,s_2,\dots,s_{m-1}$. Notice that for two strings $T$ and $T'$ over alphabet $\Sigma$, $LCS(T,T')\leq\frac{\varepsilon}{2}(|T|+|T'|)$ is equivalent to $ED(T,T')\geq(1-\varepsilon)(|T|+|T'|)$. For any $i<j<k$, we call an interval $S'[i,k]$ good if $LCS(S'[i,j], S'[j+1,k])\leq \frac{\varepsilon}{2}(k-i)$. It suffices to show that $\forall 1\leq i,k\leq n$, the interval $S'[i,k]$ is good.

%If $m\geq\lceil\frac{n}{2}\rceil$, then intervals which are substrings of $S'[n-m+2,n-m+1+\lfloor\frac{n}{2}\rfloor]=S[1,\lfloor\frac{n}{2}\rfloor], S'[1,n-m+1]=S[m,n]$ and $S'[n-m+1+\lceil\frac{n}{2}\rceil,n]=S[\lceil\frac{n}{2}\rceil,m-1]$ are good intervals as they are also substring of $S_2$ and $S_1$.

Without loss of generality let's assume $ m \in [\lceil\frac{n}{2}\rceil, n ]$.

Intervals which are substrings of $S_1$ or $S_2$ are good intervals, since $S_1$ and $ S_2$ are $\varepsilon$-synchronization strings.

We are left with  intervals crossing the ends of $S_1$ or $S_2$.
\paragraph{If $S'[i,k]$ contains $s_n,s_1$ but doesn't contain $s_{\lceil\frac{n}{2}\rceil}$:} If $j< n-m+1$, then there's no common subsequence between $s'[i,j]$ and $S'[n-m+2,k]$. Thus \[LCS(S'[i,j],S'[j+1,k])\leq LCS(S'[i,j],S'[j+1,n-m+1])\leq \frac{\varepsilon}{2}(n-m+1-i)<\frac{\varepsilon}{2}(k-i)\]
If $j\geq n-m+1$, then there's no common subsequence between $S'[j+1,k]$ and $S'[i,n-m+1]$. Thus
\[LCS(S'[i,j],S'[j+1,k])\leq LCS(S'[n-m+2,j],S'[j+1,k])\leq\frac{\varepsilon}{2}(k-(n-m+2))<\frac{\varepsilon}{2}(k-i)\]
Thus intervals of this kind are good.

\begin{figure}[H]
  \centering
  \includegraphics[width=8cm]{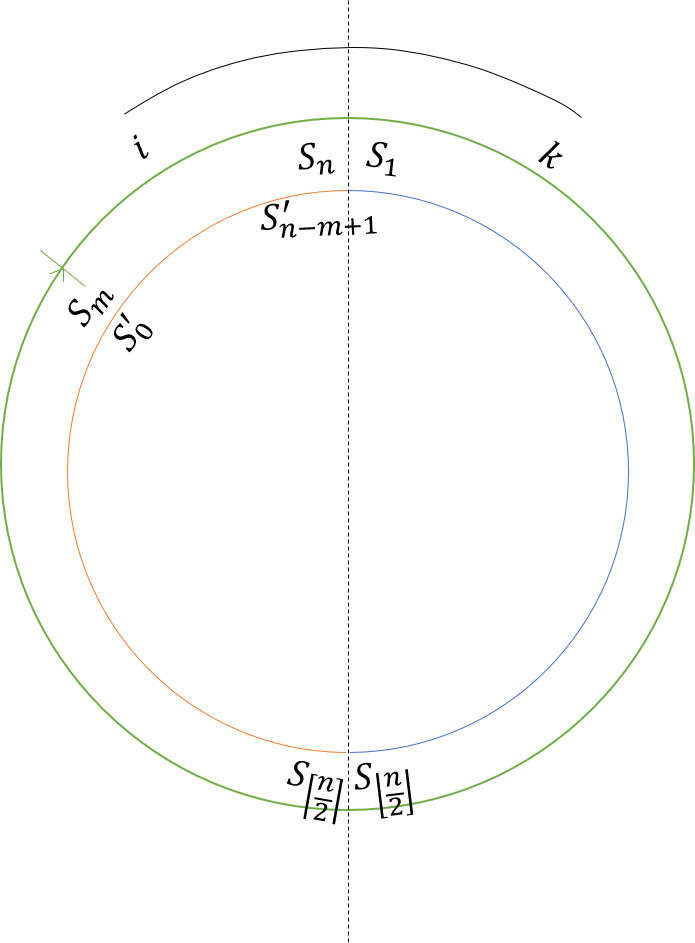}\\
  \caption{Example where $S'[i,k]$ contains $s_n,s_1$ but doesn't contain $s_{\lceil\frac{n}{2}\rceil}$}\label{sc1}
\end{figure}
\paragraph{If $S'[i,k]$ contains $s_{\lfloor\frac{n}{2}\rfloor},s_{\lceil\frac{n}{2}\rceil}$ but doesn't contain $s_n$:} If $j\leq n-m+\lfloor\frac{n}{2}\rfloor+1$, then there's no common subsequence between $S'[i,j]$ and $S'[n-m+\lceil\frac{n}{2}\rceil+1,k]$, thus
\[LCS(S'[i,j],S'[j+1,k])\leq LCS(S'[i,j],S'[j+1,n-m+\lfloor\frac{n}{2}\rfloor+1])<\frac{\varepsilon}{2}(k-i)\]
If $j\geq n-m+\lfloor\frac{n}{2}\rfloor+1$, then there's no common subsequence between $S'[j+1,k]$ and $S'[i,n-m+\lfloor\frac{n}{2}\rfloor+1]$. Thus
\[LCS(S'[i,j],S'[j+1,k])\leq LCS(S'[n-m+\lceil\frac{n}{2}\rceil+1,j],S'[j+1,k])<\frac{\varepsilon}{2}(k-i)\]
Thus intervals of this kind are good.

\begin{figure}[H]
  \centering
  \includegraphics[width=8cm]{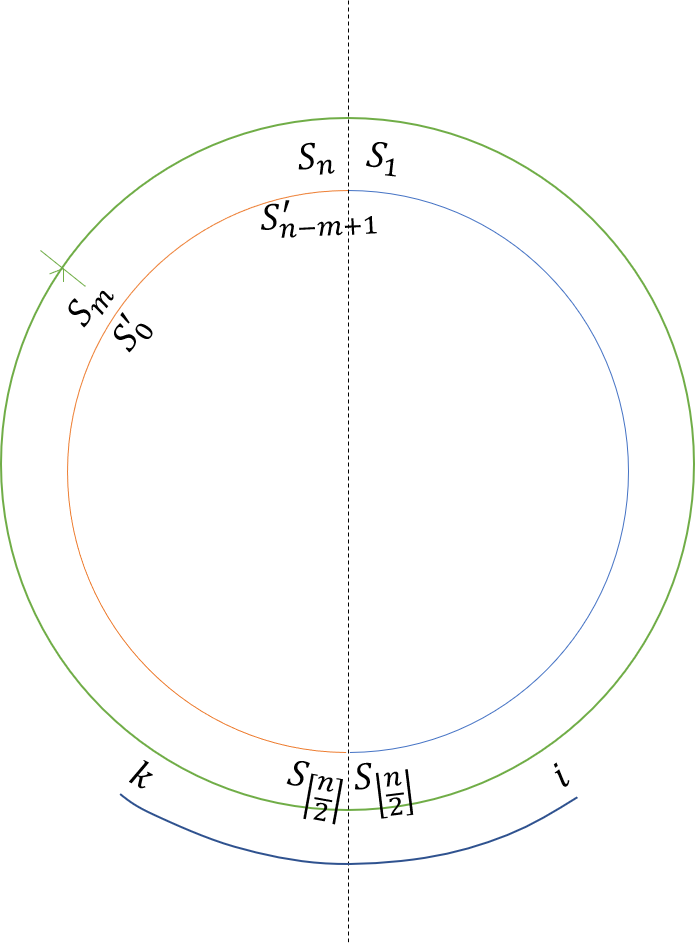}\\
  \caption{Example where $S'[i,k]$ contains $s_{\lfloor\frac{n}{2}\rfloor},s_{\lceil\frac{n}{2}\rceil}$}\label{sc2}
\end{figure}
\paragraph{If $S'[i,k]$ contains $s_{\lceil\frac{n}{2}\rceil}$ and $s_n$:} If $n-m+2\leq j\leq n-m+\lfloor\frac{n}{2}\rfloor+1$, then the common subsequence is either that of $S'[i,n-m+1]$ and $S'[n-m+\lceil\frac{n}{2}\rceil+1,k]$ or that of $S'[n-m+2,j]$ and $S'[j+1,n-m+\lfloor\frac{n}{2}\rfloor+1]$.
%This comes from the fact that if a common subsequence is a match between $S'[i,n-m+1]$ and $S'[n-m+\lceil\frac{n}{2}\rceil+1,k]$, then this common sequence contains no match between $S'[n-m+2,j]$ and $S'[j+1,n-m+\lfloor\frac{n}{2}\rfloor+1]$.
This is because $\Sigma_1\cap\Sigma_2=\emptyset$.
Thus
\begin{align*}
&LCS(S'[i,j],S'[j+1,k])\\
\leq&\max\{LCS(S'[i,n-m+1],S'[n-m+\lceil\frac{n}{2}\rceil+1,k]),\\
&\qquad LCS(S'[n-m+2,j],S'[j+1,n-m+\lfloor\frac{n}{2}\rfloor+1])\}\\
<&\frac{\varepsilon}{2}(k-i)
\end{align*}
\begin{figure}[H]
  \centering
  \includegraphics[width=8cm]{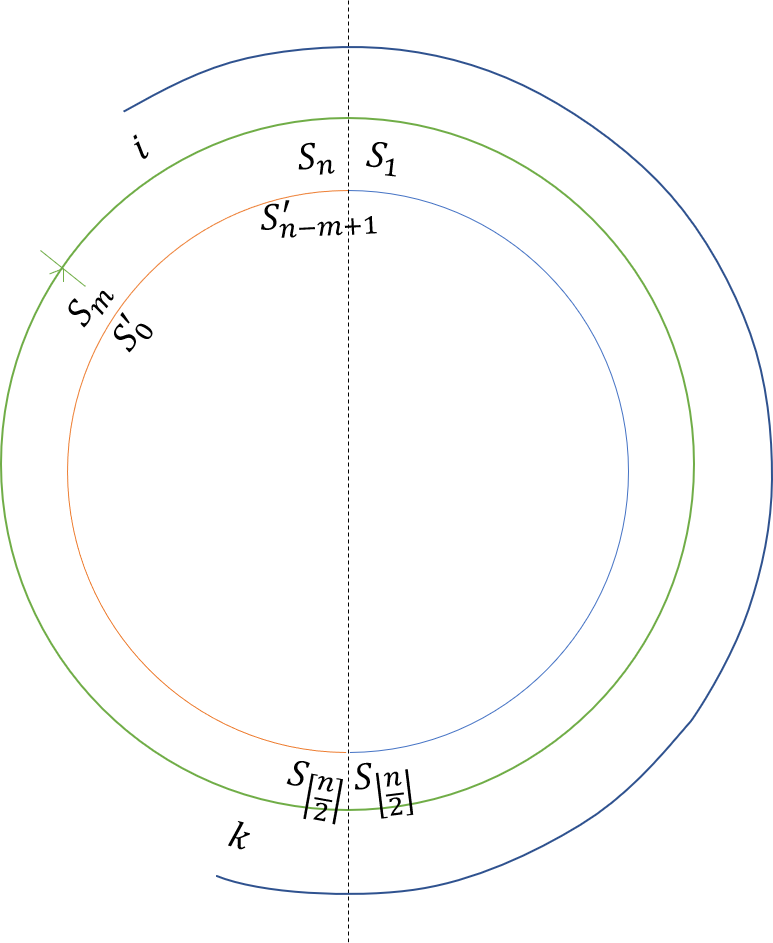}\\
  \caption{Example where $S'[i,k]$ contains $s_{\lceil\frac{n}{2}\rceil}$ and $s_n$}\label{sc3}
\end{figure}
If $j \leq n-m+1$, then there's no common subsequence between $S'[i,j]$ and $S'[n-m+2,n-m+\lfloor\frac{n}{2}\rfloor+1]$. Thus
\begin{align*}
&LCS(S'[i,j],S'[j+1,k])\\
\leq &LCS(S'[i,j],S'[j+1,n-m+1])+LCS( S'[i,j], S'[n-m+\lceil\frac{n}{2}\rceil+1,k])\\
< &    \frac{\varepsilon}{2}( n-m+1 - i  ) + \frac{\varepsilon}{2}( n - \lceil \frac{n}{2} \rceil )      \\
\leq &   \frac{\varepsilon}{2}(  n-m+1 - i  ) + \frac{\varepsilon}{2}(k- (n-m+2))      \\
= &   \frac{\varepsilon}{2}(k-1-i) \\
< &   \frac{\varepsilon}{2}(k-i)
\end{align*}
If $j\geq S'[n-m+\lceil\frac{n}{2}\rceil+1]$, the proof is similar to the case where $j\leq n-m+1$.

%Similarly we know that when $m\leq n-m+\lfloor\frac{n}{2}\rfloor+1$, $S'$ is still an $\varepsilon$-synchronization string.

This shows that $S'$ is an $\epsilon$-synchronization string. Thus by the definition of synchronization circle, the construction gives an  $\epsilon$-synchronization circle.

\end{proof}

\section{Deterministic Constructions}
We now construct $\varepsilon$-synchronization strings using synchronization circles. First we recall the following result from  \cite{haeupler2017synchronization}.

\begin{lemma}[Theorem 4.2 of \cite{haeupler2017synchronization}]
\label{insdelCode}
%An $\varepsilon$-synchronization string along with the algorithm in Appendix A form a linear time streaming solution for deletions only $(n,\delta)$-indexing problem guaranteeing $k=\frac{\varepsilon}{1-\varepsilon}n\delta$ misdecodings. Given this, together with an efficient $ECC$ $\mathcal{C}$ over alphabet $\Sigma_C$, that corrects up to $n\delta+2k$ half-errors, one obtains an insertion/deletion code that can be decoded from up to $n\delta$ deletions.
Given an $\varepsilon$-synchronization string $S$ with length $n$, alphabet $\Sigma_S$, and an efficient ECC $\mathcal{C}$ with block length $n$, alphabet $\Sigma_C$, that corrects up to $n\delta \frac{1+\varepsilon}{1-\varepsilon}$ half-errors, one can obtain an insertion/deletion code $\mathcal{C}'$  that can be decoded from up to $n\delta$ deletions, where $ \mathcal{C}' = \{(c'_1, \ldots, c'_n)| \forall  i\in [n],  c'_i = (c_i, S[i]), (c_1,\ldots, c_n) \in \mathcal{C}  \} $.

\end{lemma}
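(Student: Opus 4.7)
The plan is to combine the decoding algorithm attached to the synchronization string $S$ with the standard decoder of the underlying ECC $\mathcal{C}$. Fix a message whose encoding is $(c_1,\ldots,c_n)\in\mathcal{C}$, so the corresponding codeword of $\mathcal{C}'$ is $c'=((c_1,S[1]),\ldots,(c_n,S[n]))$. Suppose the channel deletes at most $n\delta$ coordinates, so the receiver obtains $R=(R_1,\ldots,R_m)$ with $m\geq n(1-\delta)$. Each $R_j$ is still a pair, which I write as $(\tilde c_j,\tilde s_j)$, and by construction $\tilde s_1,\ldots,\tilde s_m$ is precisely the string $S$ after the same set of at most $n\delta$ deletions.

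First I would run the decoder associated to the $\varepsilon$-synchronization string $S$ on $\tilde s_1,\ldots,\tilde s_m$ to obtain, for each received position $j\in[m]$, an index guess $\pi_j\in[n]\cup\{\bot\}$. This is exactly what the definition of a synchronization string is designed to support. Next I would build an ECC-channel output $y\in(\Sigma_C\cup\{\bot\})^n$ by setting $y_i=\tilde c_j$ whenever some $j$ has $\pi_j=i$, and $y_i=\bot$ otherwise. Compared to $(c_1,\ldots,c_n)$, every coordinate of $y$ is either (i) \emph{correct}, if a received symbol was placed at its true index; (ii) an \emph{erasure}, if no received symbol claims index $i$; or (iii) a \emph{corruption}, if some $\tilde c_j$ with a wrong index $\pi_j=i$ is written at position $i$. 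Finally, feeding $y$ into the ECC decoder of $\mathcal{C}$, which by assumption corrects up to $n\delta\cdot\frac{1+\varepsilon}{1-\varepsilon}$ half-errors, recovers $(c_1,\ldots,c_n)$ and hence the message.

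The crux is thus the quantitative claim that $y$ carries at most $n\delta\cdot\frac{1+\varepsilon}{1-\varepsilon}$ half-errors, i.e.\ that $(\textrm{erasures})+2(\textrm{corruptions})\leq n\delta\cdot\frac{1+\varepsilon}{1-\varepsilon}$. This is exactly where the defining inequality $ED(S[i,j],S[j+1,k])>(1-\varepsilon)(k-i)$ of Definition~\ref{def:sc} would enter: any labelling $\pi$ of the received positions by indices in $[n]$ corresponds to a matching between $\tilde s_1,\ldots,\tilde s_m$ and $S$, and a misassigned index $\pi_j\neq j^\star$ (for $j^\star$ the true index of the $j$-th surviving symbol) forces a gap in this matching whose size is controlled by the edit-distance inequality. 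Summing these per-position witnesses amortizes the count of misindexed plus unindexed positions against the $n\delta$ actual deletions, producing the advertised $\frac{1+\varepsilon}{1-\varepsilon}$ blowup. I would expect this LCS/matching amortization to be the main technical obstacle; once it is in hand, the reduction to ECC decoding and the final recovery step are automatic.
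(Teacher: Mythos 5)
There is a genuine gap, and it is worth being clear about what this paper actually does with this statement: the paper does not prove Lemma~\ref{insdelCode} at all --- it is imported verbatim from \cite{haeupler2017synchronization} (their Theorem~4.2) and used here as a black box, essentially only for the consequence that distinct codewords of $\mathcal{C}'$ have small LCS. Your sketch reconstructs the high-level reduction of the original proof correctly: attach $S[i]$ to $c_i$, recover index guesses from the received copy of $S$, convert the deletions into erasures and corruptions of $\mathcal{C}$, and finish with the half-error decoder. But the entire content of the lemma lives in the step you explicitly defer: the claim that after index recovery the number of half-errors is at most $n\delta\frac{1+\varepsilon}{1-\varepsilon}$. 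You neither specify an index-recovery algorithm nor carry out the counting; you only assert that the inequality $ED(S[i,j],S[j+1,k])>(1-\varepsilon)(k-i)$ ``forces a gap in the matching'' and that summing ``amortizes'' against the $n\delta$ deletions. In \cite{haeupler2017synchronization} this is precisely the technical heart of the result: one must exhibit a concrete decoding rule (their minimal relative suffix distance / global matching decoding), relate the $\varepsilon$-synchronization property to a self-matching property of $S$, and prove a quantitative misdecoding bound of the form (number of wrongly indexed surviving symbols) $\le \frac{\varepsilon}{1-\varepsilon}\cdot n\delta$, from which erasures $+\,2\cdot$corruptions $\le n\delta + \frac{2\varepsilon}{1-\varepsilon}n\delta = n\delta\frac{1+\varepsilon}{1-\varepsilon}$ follows. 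Without that lemma and algorithm, the statement ``can be decoded from up to $n\delta$ deletions'' is not established.

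Two smaller points. First, your rule ``set $y_i=\tilde c_j$ whenever some $j$ has $\pi_j=i$'' is ambiguous when several received positions claim the same index $i$ (or when the index decoder is not forced to output monotone, distinct guesses); any correct accounting must handle such collisions, since a single wrong guess can simultaneously create a corruption at one position and an erasure at another, and this is exactly why corruptions are charged two half-errors in the budget. Second, if your goal were only to support this paper's use of the lemma (Lemma~\ref{main}), it would suffice to prove the purely combinatorial statement that any two distinct codewords of $\mathcal{C}'$ have LCS at most $\bigl(1-\frac{1-\varepsilon}{1+\varepsilon}\delta\bigr)n$, which follows from the Hamming distance of $\mathcal{C}$ together with the synchronization property of $S$ and avoids discussing decoders altogether; but as a proof of the lemma as stated, the misdecoding bound cannot be waved away.
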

Using the insertion deletion code in this lemma, we have the following construction of a synchronization circle $S$ of length $n$.

\begin{algo}
\label{algo}
Main Construction.

For every $n, m \in \mathbb{N}, m\leq n$, we have the following.

\textbf{Input:}
\begin{enumerate}
  \item A ECC $\tilde{\mathcal{C}}\subset\Sigma_{\tilde{c}}^m$, with distance $ \delta m $ and $|\tilde{\mathcal{C}}| \geq \ell = \lceil \frac{n}{m}\rceil$.
  \item An $\varepsilon$-synchronization circle $SC$ of length $m$ over alphabet $\Sigma_{sc}$ i.e. $ SC=(sc_1,sc_2,\dots,sc_m) \in\Sigma_{sc}^m$.
\end{enumerate}

\textbf{Output:} An $ \varepsilon'$-synchronization  $S$ circle of length $n$.

\textbf{Operations:}

\begin{itemize}
  \item Construct a new code $\mathcal{C}\subset\Sigma^m$ s.t.
$$ \mathcal{C} = \{c=((\tilde{c}_1,sc_1),(\tilde{c}_2,sc_2),\dots,(\tilde{c}_m,sc_m))|(\tilde{c}_1,\tilde{c}_2,\dots,\tilde{c}_m)\in\tilde{\mathcal{C}}\},$$
where $\Sigma=\Sigma_{\tilde{c}}\times\Sigma_{sc}$.
  \item Choose $\ell $ codewords $C_1,C_2,\dots,C_{\ell}$ from $\mathcal{C}$.
  \item Let $S$ be concatenation of these codewords: $S=C_1\circ C_2\circ\dots\circ C_{\ell}$.
\end{itemize}

\end{algo}

\begin{lemma}
\label{main}
The output $S$ in Algorithm \ref{algo} is an $\varepsilon'$-synchronization circle, where $\varepsilon' \leq 10(1-\frac{1-\varepsilon}{1+\varepsilon}\delta) $.
\end{lemma}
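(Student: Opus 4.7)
The plan is to fix any rotation $S'$ of $S$ and show $LCS(S'[i,j], S'[j+1,k]) < \tfrac{\varepsilon'}{2}(k-i)$ for every $i<j<k$; establishing this for every rotation simultaneously gives the synchronization-circle property. The key tool is to invoke Lemma~\ref{insdelCode} in the forward direction on $\tilde{\mathcal{C}}$ (distance $\delta m$) paired with the $\varepsilon$-synchronization circle $SC$: this shows that $\mathcal{C}$ decodes from $\mu m$ deletions with $\mu = \tfrac{1-\varepsilon}{1+\varepsilon}\delta$, so any two distinct codewords $C_p, C_q \in \mathcal{C}$ satisfy $LCS(C_p, C_q) < (1-\mu)m$. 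Combining this codeword-distance bound with the sync-circle property of $SC$ (used for intervals within a single codeword) is the basis of the whole argument.

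For short intervals with $k - i \leq m$, the $sc$-coordinate projection of $S'[i,k]$ is a consecutive substring of length at most $m$ of the periodic sequence $SC \circ SC \circ \cdots$, hence a prefix of some rotation of $SC$. Since $SC$ is an $\varepsilon$-synchronization circle and any prefix of a sync string is itself a sync string, $LCS(S'[i,j], S'[j+1,k])$ is at most the LCS of the $sc$-projections, which is less than $\tfrac{\varepsilon}{2}(k-i) \leq \tfrac{\varepsilon'}{2}(k-i)$, using $\varepsilon \leq 1-\mu$ (an elementary consequence of $\delta\leq 1$). For long intervals $k - i > m$, I partition $I_1 = S'[i,j]$ into codeword-aligned pieces (two partial end pieces $A_0, A_{p+1}$ and $p$ full codewords in between), and similarly $I_2$ into $B_0, B_1, \ldots, B_{q+1}$. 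By monotonicity, the matches of any LCS decompose into block pairs forming a staircase in the $(p+2)\times(q+2)$ grid. Full-codeword pairs contribute at most $(1-\mu)m$ each; summing along the staircase and using $pm \leq |I_1|$, $qm \leq |I_2|$ gives a total full-pair contribution of at most $(1-\mu)(k-i)$. The four partial pieces have total length at most $4m$, and hence contribute at most $4m$ to the LCS. This yields $LCS(I_1,I_2) \leq (1-\mu)(k-i) + 4m$, which is $\leq 5(1-\mu)(k-i)$ whenever $k-i \geq m/(1-\mu)$.

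The main obstacle is the intermediate regime $k-i \in (m, m/(1-\mu))$, where the crude $4m$ bound on the partial-piece contribution exceeds the budget of $4(1-\mu)(k-i)$. To handle it, I refine the partial-pair analysis by classifying such pairs into two subtypes: pairs whose two partial pieces lie within the same codeword of $\mathcal{C}$, which I bound by $O(\varepsilon m)$ via the sync-circle property of $SC$ applied to two disjoint substrings of a single rotation of $SC$; and pairs whose pieces lie in distinct codewords, which I bound by $(1-\mu)m$ via the codeword-distance estimate. Since only $O(1)$ partial pairs appear on the staircase and $\varepsilon \leq 1-\mu$, the refined partial contribution is $O((1-\mu)m)$, and therefore $LCS(I_1,I_2) \leq O((1-\mu)(k-i))$ even in the intermediate regime. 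The slack factor of $10$ in $\varepsilon' \leq 10(1-\mu)$ comfortably absorbs the constants arising from this case analysis.
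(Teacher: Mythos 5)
Your overall route is the same as the paper's: decompose both halves of a long interval into codeword-aligned pieces, use the monotone ``staircase'' structure of an LCS matching (this is exactly the paper's Lemma~\ref{lcs}), bound cross-codeword pairs by the insdel distance coming from Lemma~\ref{insdelCode} ($LCS \leq (1-\mu)m = \alpha m$), bound same-codeword pairs by $\tfrac{\varepsilon}{2}m \leq \alpha m$ via the synchronization-circle property, and handle intervals of length at most $m$ directly through $SC$. The short-interval and the $k-i \geq m/(1-\mu)$ computations are fine.

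The gap is in your ``intermediate regime'' $k-i \in (m, m/(1-\mu))$: the claim that only $O(1)$ partial pairs appear on the staircase is false. A single partial end piece, say $A_0$, occupies an entire row of the grid, and a monotone staircase may pass through up to $q+2$ blocks in that row (its symbols can be matched against symbols spread over many pieces of $I_2$); moreover, in this very regime the number of pieces per side can be as large as roughly $1/(1-\mu)$, which is not a constant when $\delta$ is close to $1$ and $\varepsilon$ is small. So summing $(1-\mu)m$ over ``$O(1)$'' partial pairs does not justify the claimed $O((1-\mu)m)$ partial contribution. The repair is immediate and removes the regime split altogether: every staircase block (full or partial, same codeword or not) contributes at most $\alpha m$, and the total number of blocks is at most $(p+2)+(q+2)-1 \leq \tfrac{s_1+s_2}{m}+3$, so $LCS(I_1,I_2) \leq \alpha(s_1+s_2) + 3\alpha m < 4\alpha(k-i) < \tfrac{\varepsilon'}{2}(k-i)$ using only $m < k-i$. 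This uniform accounting is exactly how the paper's proof concludes its Case 1, so once you fix the bookkeeping your argument coincides with the paper's.
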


\begin{proof}
Suppose $\tilde{\mathcal{C}}$ can correct up to $\delta m$ half-errors. Then according to lemma \ref{insdelCode}, $\mathcal{C}$ can correct up to $\frac{1-\varepsilon}{1+\varepsilon} \delta m$ deletions.

Let $\alpha=1-\frac{1-\varepsilon}{1+\varepsilon}\delta$. Notice that $\mathcal{C}$ has the following properties:
\begin{enumerate}
  \item $LCS(\mathcal{C}) = \max_{c_1, c_2\in C}{LCS(c_1, c_2)} \leq \alpha m$
  \item Each codeword in $\mathcal{C}$ is an $\varepsilon$-synchronization circle over $\Sigma$.
\end{enumerate}

Consider any shift of the start point of $S$, we only need to prove that $\forall 1\leq i < j < k\leq n, LCS(S[i,j],S[j+1,k])<\frac{\varepsilon'}{2}(k-i)$. First we prove the lemma below.

\begin{lemma}
\label{lcs}
Suppose $T_1$ is the concatenation of $\ell_1$ strings, $T_1=S_1\circ\dots\circ S_{\ell_1}$ and $T_2$ is the concatenation of $\ell_2$ strings, $T_2=S'_1\circ\dots\circ S'_{\ell_2}$. If  there exists an integer $t$ such that for all $i, j$, we have $LCS(S_i, S'_j) \leq t$, then we have $LCS(T_1, T_2) \leq (\ell_1+\ell_2)t$.
%Suppose $T_1$ and $T_2$ are concatenations of different codewords from $\mathcal{C}$, where all the codewords used are from $\mathcal{C}$ and different. Then $LCS(T_1,T_2)\leq (\ell_1+\ell_2)\alpha m$.
\end{lemma}

\begin{proof}[Proof of \ref{lcs}]
We rename the strings in $T_1$ by $S_1, \cdots, S_{\ell_1}$ and rename the strings in $T_2$ by $S_{\ell_1+1}, \cdots, S_{\ell_1+\ell_2}$. Suppose the longest common subsequence between $T_1$ and $T_2$ is $\tilde{T}$, which can be viewed as a matching between $T_1$ and $T_2$.
%Suppose in this match, the $i$th codeword $C_i$ in $S_1$ matches to a (consecutive) substring $T_i$ in $S_2$ of length $j_i$.

we can divide $\tilde{T}$ sequentially into disjoint intervals, where each
interval corresponds to a common subsequence between a different pair
of strings $(S_i, S_j)$, where $S_i$ is from $T_1$ and $S_j$ is from $T_2$. In
addition, if we look at the intervals from left to right, then for any
two consecutive intervals and their corresponding pairs $(S_i, S_j)$ and $(S_{i'}, S_{j'})$, we must have $i' \geq i$ and $j' \geq j$  since the matchings which correspond to two intervals cannot cross each other. Furthermore either $i' > i$
or $j' > j$ as the pair $(S_i, S_j)$ is different from $(S_{i'}, S_{j'})$. 

Thus, starting from the first interval, we can
label each interval with either $i$ or $j $ such that every interval
receives a different label, as follows. We label the first interval using either $i$ or $j$. Then, assuming we have already labeled some intervals and now look at the next interval. Without loss of generality assume that the previous interval is labeled using $i$, now if the current $i' > i$ then we can label the current interval using $i'$; otherwise we must have $j'>j$ so we can label the current interval using $j'$. Thus the total number of the labels is at
most $l_1+l_2$, which means the total number of the intervals is also at
most $l_1+l_2$. Note that each interval has length at most $t$, therefore we can upper bound $LCS(T_1, T_2)$ by
 $(l_1+l_2)t$.

%
%For this match, we denote $T_i$ as the shortest (continuous) substring in $S_2$ which $C_i$ matches to. Let $j_i = |T_i|$.
%See Figure \ref{1} as an example in which  $j_1=3,j_2=5,j_3=4$.
%\begin{figure}[H]
%  \centering
%  \includegraphics[width=8cm]{LCS.png}\\
%  \caption{Example of match between $S_1$ and $S_2$}\label{1}
%\end{figure}
%
%Notice that $\sum_{i=1}^{\ell_1}j_i\leq \ell_2 m$. The number of codewords that intersect with
%%the continuous substring matching with $C_i$
%$T_i$
%is at most $\lceil\frac{j_i}{m}\rceil+1$. Thus the length of the common subsequence $i$th codeword matches to is at most $(\lceil\frac{j_i}{m}\rceil+1)\alpha m$. Hence we have
%\begin{align*}
%LCS(S_1,S_2)&\leq \sum_{i=1}^{\ell_1}(\lceil\frac{j_i}{m}\rceil+1)\alpha m\\
%&\leq \sum_{i=1}^{\ell_1} (\frac{j_i}{m}+2)\alpha m\\
%&= \sum_{i=1}^{\ell_1} j_i\alpha +2\sum_{i=1}^{\ell_1}\alpha m\\
%&\leq(\ell_2+2\ell_1)\alpha m
%\end{align*}
%Thus $LCS(S_1,S_2)$ is at most $(\ell_2+2\ell_1)\alpha m.\square$

\end{proof}

Suppose $S_1=S[i,j]$ and $S_2=S[j+1,k]$. Let $\varepsilon' = 10\alpha$.
\paragraph{Case 1:} $k-i>m$. Let $|S_1|=s_1$ and $|S_2|=s_2$, thus $s_1+s_2 > m$. If we look at each $S_h$ for $h=1, 2$, then $S_h$ can be divided into some consecutive codewords, plus at most two incomplete codewords at both ends. In this sense each $S_h$ is the concatenation of $\ell_h$ strings with $\ell_h < \frac{s_h}{m}+2$. An example of the worst case appears in Figure \ref{4}.

\begin{figure}[H]
  \centering
  \includegraphics[width=8cm]{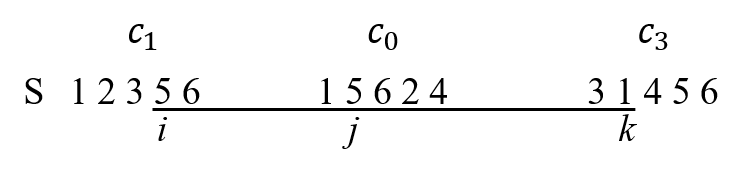}\\
  \caption{Example of the worst case, where $j$ splits a codeword, and there are two incomplete codewords at both ends.}\label{4}
\end{figure}

Now consider the longest common subsequence between any pair of these strings where one is from $S_1$ and the other is from $S_2$, we claim that the length of any such longest common subsequence is at most $\alpha m$. Indeed, if the pair of strings are from two different codewords, then by the property of the code $\mathcal{C}$ we know the length is at most $\alpha m$. On the other hand, if the pair of strings are from a single codeword (this happens when $j$ splits a codeword, or when $S[i]$ and $S[k]$ are in the same codeword), then they must be two disjoint intervals within a codeword. In this case, by the property that any codeword is also a synchronization circle, the length of the longest common subsequence of this pair is at most $\frac{\varepsilon}{2} m$.

Note that $\alpha=1-\frac{1-\varepsilon}{1+\varepsilon}\delta \geq  1-\frac{1-\varepsilon}{1+\varepsilon} =\frac{2\varepsilon}{1+\varepsilon} \geq \varepsilon$ (since $\delta, \varepsilon \in (0, 1)$). Thus $\frac{\varepsilon}{2} m < \alpha m$. Therefore, by Lemma~\ref{lcs}, we have 

\begin{align*}
& LCS(S_1, S_2) \\ 
< &(\frac{s_1}{m}+2+\frac{s_2}{m}+2) \alpha m \\
= &\alpha (s_1+s_2+4 m) \\
< & 5 \alpha (s_1+s_2) \\
= & 5 \alpha (k-i) = \frac{\varepsilon'}{2}(k-i)
\end{align*}

\paragraph{Case 2:} If $k-i\leq m$, then according to the property of synchronization circle $ SC $, we know that the longest common subsequence of $S_1$ and $S_2$ is less than $\frac{\varepsilon}{2}(k-i)\leq \alpha(k-i)\leq\frac{\varepsilon'}{2}(k-i)$.

As a result, the longest common subsequence of $S[i,j]$ and $S[j+1,k]$ is less than $\frac{\varepsilon'}{2}(k-i)$, which means that $S$ is an $\varepsilon'$-synchronization circle. %string. It is  an $\varepsilon'$-synchronization circle since the analysis still holds when the interval from $i$ to $k$ crosses the end of the string.
\end{proof}

\begin{lemma}
\label{smallcode}

For any $n\in \mathbb{N}$, any $\varepsilon \in [0,1]$,
one can construct a ECC in time $  O(2^{\varepsilon n} (\frac{2e}{\varepsilon})^n n \log(1/\varepsilon))  $ and space $  O(2^{\varepsilon n} n \log (1/\varepsilon) ) $, with block length $n$, number of codewords $  2^{\varepsilon n}$, distance $ d = (1-\varepsilon)n$, alphabet size $2e/\varepsilon$.
\end{lemma}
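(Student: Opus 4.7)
The plan is to realize this as a greedy, Gilbert--Varshamov-style construction. Set $q=\lceil 2e/\varepsilon\rceil$ and fix any ordering of $\Sigma^n$; walk through the ordering and add each vector to the code if and only if it has Hamming distance at least $d=(1-\varepsilon)n$ from every previously chosen codeword. I will show that this procedure produces at least $M=2^{\varepsilon n}$ codewords, and then read off the time and space.

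The key estimate is an upper bound on the Hamming ball volume $V_q(n,d-1)=\sum_{i=0}^{d-1}\binom{n}{i}(q-1)^i$. Rather than expanding this sum directly (which naturally produces factors indexed by $1-\varepsilon$ that are awkward to cancel against $q^{\varepsilon n}$), I will describe each vector at distance at most $d-1$ from a center $c^{*}$ by a set $T$ of $n-(d-1)=\varepsilon n+1$ positions on which it agrees with $c^{*}$, with the remaining $d-1$ coordinates filled arbitrarily. This overcounts but gives the clean bound
\[
V_q(n,d-1)\;\leq\;\binom{n}{\varepsilon n+1}\,q^{d-1}\;\leq\;(e/\varepsilon)^{\varepsilon n+1}\,q^{(1-\varepsilon)n-1},
\]
using $\binom{n}{k}\leq(en/k)^{k}$. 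The total forbidden volume after $M-1$ codewords is at most $M\binom{n}{\varepsilon n+1}q^{d-1}$, and I need this to be strictly less than $q^{n}$, equivalently $M(e/\varepsilon)^{\varepsilon n+1}\leq q^{\varepsilon n+1}$. With $q\geq 2e/\varepsilon$ this holds because $q^{\varepsilon n+1}\geq 2^{\varepsilon n+1}(e/\varepsilon)^{\varepsilon n+1}=2M(e/\varepsilon)^{\varepsilon n+1}$, so the greedy step never gets stuck before $M$ codewords have been collected.

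For the resource bounds: the algorithm enumerates $q^{n}=(2e/\varepsilon)^{n}$ candidate vectors and, for each, compares it against up to $M$ stored codewords; a single distance computation touches $n$ symbols of bit-length $O(\log q)=O(\log(1/\varepsilon))$, yielding total time $O\bigl(2^{\varepsilon n}(2e/\varepsilon)^{n}\,n\log(1/\varepsilon)\bigr)$. Working memory is dominated by the stored code, of size $O(Mn\log q)=O(2^{\varepsilon n} n\log(1/\varepsilon))$. The only real subtlety will be pinning down the constant so that $q=2e/\varepsilon$ (and not, say, $4e/\varepsilon$) suffices: the ``dual'' volume bound above isolates the $(e/\varepsilon)^{\varepsilon n}$ factor on the exponent-$\varepsilon n$ side matching $q^{\varepsilon n}$, and the factor of $2$ provided by $q=2e/\varepsilon$ is exactly what absorbs the $M=2^{\varepsilon n}$ on the left-hand side.
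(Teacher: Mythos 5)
Your proposal is correct and follows essentially the same route as the paper: a greedy Gilbert--Varshamov-style construction with the ball volume bounded by choosing the $\approx\varepsilon n$ agreement positions and filling the rest arbitrarily, with $q=2e/\varepsilon$ chosen so the factor $2^{\varepsilon n}$ is absorbed, and the same time/space accounting (enumerate all $q^n$ candidates, compare each against the stored codewords). The only differences are cosmetic (a single pass over a fixed ordering versus repeated exclude-and-pick, and a marginally sharper radius $d-1$ versus $d$ in the counting), neither of which changes the argument.
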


\begin{proof}

We conduct a brute-force search here to find all the codewords one by one.

We denote the code as $\mathcal{C}$ and the alphabet as $\Sigma$. Let $|\Sigma| = q$.
At first, let $\mathcal{C} = \emptyset$. Then we add an arbitrary element in $\Sigma^n$ to $\mathcal{C}$. Every time after a new element $C$ is added to $\mathcal{C}$, we exclude every such element in $\Sigma^{n}$ that has distance less than $d$ from $C$. Then we pick an arbitrary one from the remaining elements, adding it to $\mathcal{C}$. Keep doing this until $|\mathcal{C}| = 2^{\varepsilon n}$.

Note that given $C \in \Sigma^n$, the total number of elements that have distance less than $d$ to $C$, is at most ${n \choose d} q^d  = {n \choose (n-d)} q^d \leq (\frac{e}{\varepsilon})^{\varepsilon n} q^{(1-\varepsilon) n} $. We have to require that  $ |\mathcal{C}| (\frac{e}{\varepsilon})^{\varepsilon n} q^{(1-\varepsilon) n} \leq q^n $.
Let $q = 2e/\varepsilon$. So $\mathcal{C}$ can be $2^{\varepsilon n}$.

The exclusion operation takes time $  O((\frac{2e}{\varepsilon})^n n \log(1/\varepsilon)) $ as we have to
exhaustively search the space and for each word we have to compute it's hamming distance to the new added code word. Since there are  $  2^{\varepsilon n}$ code words, the time complexity is as stated.

We have to record those code words, so the space complexity is also as stated.
\end{proof}

\begin{theorem}
\label{mainTheorem}
For  any $n\in \mathbb{N}$ and any $\varepsilon \in (0, 1)$, an $\varepsilon$-synchronization circle $S$ of length $n$ over alphabet $\Sigma$ where $|\Sigma| = O(\varepsilon^{-3})$ can be constructed in time  $(O(\frac{1}{\varepsilon^2}))^{O(\frac{\log n}{\varepsilon})}\cdot \poly (\frac{\log n}{\varepsilon})$.  If $\varepsilon$ is a constant, then the running time is $\poly(n)$.
\end{theorem}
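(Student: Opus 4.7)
The plan is to instantiate Algorithm~\ref{algo} with carefully chosen inputs. First I will set parameters so that the bound in Lemma~\ref{main} delivers a final $\varepsilon'$ no larger than the target $\varepsilon$. Concretely, I will pick small constants $c_0,c_1>0$ so that the inner synchronization circle has parameter $\varepsilon_0 = c_0\varepsilon$ and the inner ECC has relative distance $\delta = 1-c_1\varepsilon$; plugging into $\varepsilon' \leq 10\bigl(1-\frac{1-\varepsilon_0}{1+\varepsilon_0}\delta\bigr)$ and expanding to first order shows that $1-\frac{1-\varepsilon_0}{1+\varepsilon_0}\delta$ is a linear combination of $c_0\varepsilon$ and $c_1\varepsilon$ plus lower order terms, so for sufficiently small $c_0,c_1$ the output is indeed an $\varepsilon$-synchronization circle.

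Next, I will fix the inner block length $m = C\log n/\varepsilon$ for a large enough constant $C$ so that the ECC supplies enough codewords. Invoking Lemma~\ref{smallcode} with parameter $\varepsilon_1 = c_1\varepsilon$ produces an ECC $\tilde{\mathcal{C}}$ of block length $m$, distance $(1-\varepsilon_1)m$, alphabet size $O(1/\varepsilon)$, and $2^{\varepsilon_1 m} = n^{c_1 C}$ codewords; picking $C$ so that $c_1 C \geq 2$ ensures $2^{\varepsilon_1 m} \geq \lceil n/m\rceil$. The inner $\varepsilon_0$-synchronization circle $SC$ of length $m$ over alphabet of size $O(\varepsilon_0^{-2}) = O(\varepsilon^{-2})$ exists by Theorem~\ref{syncCircle}, so I will obtain it by brute-force enumeration: iterate over all strings in $\Sigma_{sc}^m$, and for each candidate verify that every cyclic shift satisfies Definition~\ref{def:sc} by computing the edit distances of all $O(m^2)$ interval splits; this costs $\poly(m)$ per candidate.

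Finally I will feed $\tilde{\mathcal{C}}$ and $SC$ into Algorithm~\ref{algo} and apply Lemma~\ref{main} to conclude that the output is an $\varepsilon$-synchronization circle of length $\ell m \geq n$ (adjusting $m$ slightly if needed so the length is exactly $n$). The alphabet is $\Sigma = \Sigma_{\tilde{c}}\times\Sigma_{sc}$ with $|\Sigma| = O(1/\varepsilon)\cdot O(1/\varepsilon^2) = O(\varepsilon^{-3})$, matching the claim. The bottleneck in the running time is the brute-force search for $SC$: enumerating $\Sigma_{sc}^m$ costs $(O(\varepsilon^{-2}))^{O(\log n/\varepsilon)}$, multiplied by a $\poly(\log n/\varepsilon)$ verification factor. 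The ECC construction from Lemma~\ref{smallcode} contributes a strictly smaller factor (its base alphabet $O(1/\varepsilon)$ is smaller), and Algorithm~\ref{algo} itself runs in $O(n)$. For constant $\varepsilon$ this collapses to $2^{O(\log n)}\cdot\poly(\log n) = \poly(n)$, giving the second claim.

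The main obstacle is not an analytic difficulty but a bookkeeping one: ensuring the constants $c_0,c_1$ are small enough that the factor $10$ in Lemma~\ref{main} does not inflate the error past $\varepsilon$, while simultaneously keeping the alphabet contributions $O(1/\varepsilon)$ and $O(1/\varepsilon^2)$ so the product lands at $O(\varepsilon^{-3})$. Existence of a valid $SC$ to be found by the search is guaranteed by Theorem~\ref{syncCircle}, so the brute-force step is always well-defined; no deeper combinatorial argument is required beyond what Lemmas~\ref{main} and~\ref{smallcode} already provide.
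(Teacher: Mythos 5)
Your proposal is correct and follows essentially the same route as the paper: instantiate Algorithm~\ref{algo} with the brute-force ECC of Lemma~\ref{smallcode} (block length $m=O(\frac{\log n}{\varepsilon})$, alphabet $O(1/\varepsilon)$) and a brute-force-found $\Theta(\varepsilon)$-synchronization circle of length $m$ over an $O(\varepsilon^{-2})$-size alphabet, with constants tuned so that Lemma~\ref{main} yields $\varepsilon'\leq\varepsilon$, the alphabet is $O(\varepsilon^{-3})$, and the runtime is dominated by the $(O(\varepsilon^{-2}))^{O(\log n/\varepsilon)}$ search. The paper fixes the constants explicitly (inner parameter $\varepsilon/30$ and $\rho=\frac{1+\varepsilon/30}{1-\varepsilon/30}(1-\frac{\varepsilon}{10})$) where you argue by a first-order expansion, but this is the same argument.
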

\begin{proof}
By Lemma \ref{smallcode}, we can construct a  ECC $ \mathcal{C} $ with block length $m = O(\frac{\log n}{\varepsilon})$, $|\mathcal{C}  | = n$, distance $\rho m$,  $\rho=\frac{1+\frac{\varepsilon}{30}}{1-\frac{\varepsilon}{30}}(1-\frac{\varepsilon}{10})=1-\Omega(\varepsilon)$,  alphabet $\Sigma_{ \mathcal{C} }$ of size $O(1/\varepsilon)$. Let $\mathcal{SC}$ be an $\frac{\varepsilon}{30}$-synchronization circle over alphabet $\Sigma_{sc}$ with length $m$. Let $|\Sigma_{sc}| = O(\varepsilon^{-2})$. Then according to the construction algorithm and lemma \ref{main}, we have an $\varepsilon$-synchronization string $S$.

The construction of  $ SC $ takes $(O(\frac{1}{\varepsilon^2}))^{O(m)}\cdot \poly (m)=(O(\frac{1}{\varepsilon^2}))^{O(\frac{\log n}{\varepsilon})}\cdot \poly (\frac{\log n}{\varepsilon})$. By Lemma \ref{smallcode}, constructing $\mathcal{C}$ takes time $ O(2^{\varepsilon m} (\frac{2e}{\varepsilon})^m m \log(1/\varepsilon))  $. Thus the total running time is $(O(\frac{1}{\varepsilon^2}))^{O(\frac{\log n}{\varepsilon})}\cdot \poly (\frac{\log n}{\varepsilon})$. Regarding $\varepsilon$ as a constant, it is $\poly(n)$.
\end{proof}

To make the construction more efficient, we can use a 2-level recursion to reduce the running time to $n\cdot \poly\log n$, which is near linear.

\begin{theorem}
\label{efficient}
There exists a constant $C>1$ such that for any $n\in \mathbb{N}$ and any $\varepsilon \geq \frac{C (\log \log n)^2}{\log n}$, an $\varepsilon$-synchronization circle $S$ over alphabet $\Sigma$ where $|\Sigma| = O(\varepsilon^{-4})$ can be constructed in $O(n\cdot  (\log \log n)^2)$ time.
\end{theorem}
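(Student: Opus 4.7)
The plan is to bootstrap Theorem~\ref{mainTheorem} by one more application of Algorithm~\ref{algo}, giving a two-level recursive construction. At the top level I invoke Algorithm~\ref{algo} with block length $m_1 = \Theta(\log n/\varepsilon^2)$, feeding it an $(\varepsilon/30)$-synchronization circle $\mathcal{SC}_1$ of length $m_1$ together with an ECC $\tilde{\mathcal{C}}_1$ of block length $m_1$, at least $\lceil n/m_1\rceil$ codewords, relative distance $1-\Theta(\varepsilon)$, and alphabet of size $O(\varepsilon^{-1})$. Setting the slacks as in the proof of Theorem~\ref{mainTheorem}, Lemma~\ref{main} yields $\varepsilon'\leq\varepsilon$, so the output $S$ is an $\varepsilon$-synchronization circle of length $n$.

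For $\mathcal{SC}_1$ I invoke Theorem~\ref{mainTheorem} recursively at length $m_1$. Its running time $(O(\varepsilon^{-2}))^{O(\log m_1/\varepsilon)}\poly(\log m_1/\varepsilon)$ simplifies to $2^{O((\log\log n)^2/\varepsilon)}\poly(\log n)$ once I use $\log m_1=\Theta(\log\log n)$ and $\log(1/\varepsilon)=O(\log\log n)$ in our regime; the hypothesis $\varepsilon\geq C(\log\log n)^2/\log n$ bounds the exponent by $\log n/C$, giving $n^{O(1/C)}$ overall, sublinear for sufficiently large $C$. The recursion emits $\mathcal{SC}_1$ over an alphabet of size $O(\varepsilon^{-3})$.

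The crucial step is building $\tilde{\mathcal{C}}_1$ in near-linear total time at alphabet $O(\varepsilon^{-1})$: calling Lemma~\ref{smallcode} directly on length $m_1$ would cost $(2e/\varepsilon)^{m_1}=n^{\omega(1)}$, while pure Reed-Solomon would inflate the alphabet to $\Theta(m_1)$. I use a concatenated construction: an outer Reed-Solomon code over $\mathbb{F}_{q_1}$ with $n_1=q_1=\Theta(\log n/(\varepsilon\log\log n))$ and relative distance $1-\Theta(\varepsilon)$, concatenated with an inner code obtained from Lemma~\ref{smallcode} of block length $n_2=\Theta(\log\log n/\varepsilon)$, alphabet $O(\varepsilon^{-1})$, relative distance $1-\Theta(\varepsilon)$, and at least $q_1$ codewords. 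The concatenation has block length $n_1 n_2=m_1$, alphabet $O(\varepsilon^{-1})$, combined relative distance $1-\Theta(\varepsilon)$, and $q_1^{k_1}\geq n/m_1$ codewords. The inner table is built by Lemma~\ref{smallcode} in time $(2e/\varepsilon)^{n_2}\poly(n_2)=2^{O((\log\log n)^2/\varepsilon)}\poly(\log n)=n^{O(1/C)}$; each codeword emission costs $O(n_1\log^3 n_1)$ bits for the outer Reed-Solomon encoding (Theorem~\ref{RSenctime}) plus $O(m_1\log\log n)$ bits for the $n_1$ inner-table lookups, and summing over $\lceil n/m_1\rceil$ codewords yields total running time $O(n(\log\log n)^2)$. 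Combining the alphabets, $O(\varepsilon^{-3})$ from $\mathcal{SC}_1$ and $O(\varepsilon^{-1})$ from $\tilde{\mathcal{C}}_1$, yields the claimed final alphabet size $O(\varepsilon^{-4})$.

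The main obstacle is engineering $\tilde{\mathcal{C}}_1$ with all four constraints---small alphabet, near-constant rate, block length $\Theta(\log n/\varepsilon^{O(1)})$, and relative distance $1-\Theta(\varepsilon)$---simultaneously, in near-linear time; the concatenation above threads this needle only after carefully calibrating $n_1,n_2,q_1$ against the hypothesis on $\varepsilon$, which is also exactly what keeps both the inner code construction via Lemma~\ref{smallcode} and the recursive $\mathcal{SC}_1$ call via Theorem~\ref{mainTheorem} inside $n^{o(1)}$ time.
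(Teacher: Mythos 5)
Your proposal is correct and follows essentially the same route as the paper: a two-level construction that feeds Algorithm~\ref{algo} a small synchronization circle from Theorem~\ref{mainTheorem} together with a concatenated code (outer Reed--Solomon, inner brute-force code from Lemma~\ref{smallcode}), using the hypothesis $\varepsilon \geq C(\log\log n)^2/\log n$ to keep the recursive circle and inner-code constructions in $n^{O(1/C)}$ time and arriving at the same $O(\varepsilon^{-3})\times O(\varepsilon^{-1})=O(\varepsilon^{-4})$ alphabet accounting. The only differences are minor parameter choices (e.g., outer block length $\Theta(\log n/(\varepsilon\log\log n))$ versus the paper's $O(\log n/\varepsilon)$), which do not change the argument.
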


\begin{proof}
First by Theorem \ref{mainTheorem} we can construct a synchronization circle $S_0$ of length $m' = O(\frac{\log n (\log \log n + \log \frac{1}{\varepsilon}) }{\varepsilon^2} )$ over alphabet $\Sigma_0$ of size $O(\varepsilon^{-3})$. Let $ECC$ $\mathcal{C}$ be the concatenation code of two codes, an outer $(m=O(\frac{\log n}{\varepsilon}),  \Omega(\varepsilon m) , (1-O(\varepsilon))m)$ Reed-Solomon code with alphabet size $m$, and an inner code by Lemma \ref{smallcode} with block length $m_0 = O(\frac{\log m}{\varepsilon})$, number of codewords $m$, distance $(1-O(\varepsilon))m_0$,  alphabet size $O(1/\varepsilon)$. Thus $\mathcal{C}$ is an ECC with block length $m' =m m_0$, number of code words at least $n$, distance  $(1-O(\varepsilon))m'$, alphabet size $O(1/\varepsilon)$. Then one can use $S_0$ together with $\mathcal{C}$ to construct an $\varepsilon$-synchronization string over alphabet $\Sigma$ of size $O(\varepsilon^{-4})$, according to Algorithm \ref{algo}.

By Theorem \ref{mainTheorem}, the construction of $S_0$ takes $(O(  \frac{1}{\varepsilon^2} ))^{O(\frac{\log m'}{\varepsilon})}\cdot \poly (\frac{\log m'}{\varepsilon})$. The time to copy it for $n$ positions is $O(n \log(1/\varepsilon))$. The time to compute the inner code is $m(\frac{2e}{\varepsilon})^{O(m_0)} O(m_0 \log \frac{1}{\varepsilon})$. The inner code only have to be computed once. Computing the outer Reed-Solomon code takes time $O(m \log^3 m)$ by Theorem \ref{RSenctime}.
Copying the inner code for all symbols of one
outer codeword takes time $O(m m_0 \log(1/\varepsilon))$, so the total
time to compute one codeword of the concatenated code is $O(m \log^3m+ m m_0 \log(1/\varepsilon))$. We use $n/m'=n/(m m_0)$ concatenated codewords for  $S$, so this takes time
$\frac{n}{m m_0}(O(m \log^3m+ m m_0 \log(1/\varepsilon)))$.
Thus the total running time is
%$$\frac{\varepsilon\log n}{\log\log n}O(\frac{1}{\varepsilon})^{O(\frac{\log\log n}{\varepsilon})}+ O(\frac{n}{m} m \cdot \poly \log m  + m(\frac{1}{\varepsilon})^{O(m_0)}).$$
\begin{equation}
\begin{split}
&(O(  \frac{1}{\varepsilon^2}))^{O(\frac{\log m'}{\varepsilon})}\cdot \poly (\frac{\log m'}{\varepsilon}) + O(n \log(1/\varepsilon)) \\
+ & m(\frac{2e}{\varepsilon})^{O(m_0)} O(m_0 \log \frac{1}{\varepsilon})
 +  \frac{n}{m_0}( O( \log^3 m+ m_0 \log(1/\varepsilon)))\\
= &  (O(\frac{1}{\varepsilon}))^{O(\frac{\log \log n+\log(1/\varepsilon)}{\varepsilon})}+O(n \log(1/\varepsilon))+ O(\varepsilon n  \log^2 \frac{\log n}{\varepsilon}).
\end{split}
\end{equation}

Thus as long as $\varepsilon \geq \frac{C (\log \log n)^2}{\log n}$ for some constant $C>1$,  the running time is then $O(n\cdot  (\log \log n)^2)$, which is near linear.

%
%First we use same technique in Theorem \ref{mainTheorem} to construct a synchronization string $\mathcal{S}_0$, which is also a synchronization circle, of length $m=O(\log n)$ over alphabet $\Sigma_0$ of size $O(\varepsilon^{-3})$. Let $ECC$ $\mathcal{C}$ be a concatenation code of two Reed-Solomon code, an outer code $[n_1,k_1,d_1]_q$ and and inner code $[n_2,k_2,d_2]_q$ where $q=O(\frac{1}{\varepsilon})$. Then one can use $\mathcal{S}_0$ together with $\mathcal{C}$ to construct an $\varepsilon$-synchronization string over alphabet $\Sigma$ of size $O(\varepsilon^{-4})$.
%
%The construction of $\mathcal{S}_0$ takes $\frac{\varepsilon\log n}{\log\log n}O(\frac{1}{\varepsilon})^{O(\frac{\log\log n}{\varepsilon})}$. The time to computer the inner code is $\frac{\varepsilon\log n}{\log\log n}O(\frac{1}{\varepsilon})^{O(\frac{\log\log n-\log(\varepsilon)}{\varepsilon})}$. Later when computing outer codeword and encode each symbol, we only need to copy the inner code. As we need $\frac{n}{n_1n_2}$ codewords, the time for copying the inner codeword is $\frac{n}{n_1n_2}\cdot O(n_1\poly\log (n_1)\cdot(n_2\log (\frac{1}{\varepsilon})))$. Thus the total running time is $\frac{\log n}{\log\log n}O(\frac{1}{\varepsilon})^{O(\frac{\log\log n+\log(1/\varepsilon)}{\varepsilon})}+O(n\cdot \poly(\log\log n+\log(1/\varepsilon)))$. Regarding $\varepsilon$ as a constant, the running time is then $n \poly(\log\log n)$, which is near linear.
\end{proof}

\section{Discussion and Open Problems}
One question here is to improve the alphabet size in our deterministic constructions. For example, it would be good to give an efficient deterministic construction that matches Theorem~\ref{syncStr}. It seems a little tricky to achieve this using our approach, but other more sophisticated approaches may work (e.g., direct derandomization of Theorem~\ref{syncStr}).

Perhaps more interestingly, our work shows another connection between synchronization strings and codes for insertion and deletion errors --- the latter can be used to construct the former. This can be viewed as the reverse direction of the connection found in \cite{haeupler2017synchronization}. Together these results show that synchronization strings and codes for insertion and deletion errors are closely related objects. Thus any improvement to one may lead to improvement to the other. Indeed, these two objects are similar in several aspects. For example, when considering $\varepsilon$-synchronization strings and codes that can correct $1-\varepsilon$ fraction of insertion and deletion errors, both of them have alphabet size $O(1/\varepsilon^2)$ when using a random construction. However, we note that \cite{BukhV16} constructed codes that can correct $1-\varepsilon$ fraction of insertion and deletion errors with alphabet size $O(1/\varepsilon)$, which beats the random construction. Thus it is a natural and interesting open problem to see if there also exist $\varepsilon$-synchronization strings with alphabet size $O(1/\varepsilon)$.

\bibliographystyle{plain}
\bibliography{reference}

\end{document}